\newenvironment{CenteredBox}{%
	\begin{Sbox}}{
	\end{Sbox}\centerline{\parbox{\wd\@Sbox}{\TheSbox}}}
\theoremstyle{plain}
\newtheorem{THM}{Theorem} 
\newtheorem{LMA}{Lemma} 
\theoremstyle{definition}
\newtheorem{DEFN}[THM]{Definition} 
\numberwithin{equation}{section}		
\numberwithin{figure}{section}			
\numberwithin{table}{section}				
\newcommand{\MSD}[1][0]{
\ifthenelse{ \equal{#1}{0} }{
	\bf{msd}
}{
  	\bf{msd\_{#1}}
}}
\newcommand{\LSD}[1][0]{
\ifthenelse{ \equal{#1}{0} }{
	\bf{lsd}
}{
  	\bf{lsd\_{#1}}
}}
\newcommand{\ReverseBlack}{
{\textasciigrave}
}
\definecolor{cmd}{rgb}{0.0, 0.4, 0.0}
\definecolor{out}{rgb}{0.57, 0.64, 0.69}
\lstdefinestyle{pre}{
	language=bash,
	columns=fullflexible,
	showstringspaces=false,
	escapeinside={(*}{*)},
}
\lstdefinestyle{cmd}{
	language=C,
	stringstyle=\color{blue},
	keywordstyle=\color{cmd},     
	otherkeywords={exit,eval,def,macro,reg,load},
	emptylines=1,
	breaklines=true,
	basicstyle=\color{black},
	showstringspaces=false,
	escapeinside={(*}{*)},
	moredelim=**[is][\color{out}]{(@}{@)},
	moredelim=**[is][\color{red}]{(!}{!)},
}
\lstdefinestyle{reg}{
	language=C,
	stringstyle=\color{blue},
	keywordstyle=\color{cmd},     
	otherkeywords={exit,eval,def,reg,load},
	emptylines=1,
	breaklines=true,
	basicstyle=\color{black},
	showstringspaces=false,
	escapeinside={(@}{@)},
}
\lstdefinestyle{regerr}{
	language=C,
	deletekeywords={char},
	emptylines=1,
	breaklines=true,
	basicstyle=\color{red},
	showstringspaces=false,
	escapeinside={(@}{@)},
}
\lstdefinestyle{file} {
	frame=l,
	framesep=4.5mm,
	framexleftmargin=2.5mm,
	fillcolor=\color{gray!50!white},
	rulecolor=\color{blue},
	numberstyle=\normalfont\color{black},
	numbers=left,
}
\lstdefinestyle{err}{
	language=C,
	deletekeywords={char},
	emptylines=1,
	breaklines=true,
	basicstyle=\color{red},
	showstringspaces=false,
	escapeinside={(*}{*)},
}
\lstdefinestyle{out}{
	language=C,
	emptylines=1,
	breaklines=true,
	basicstyle=\color{out},
	showstringspaces=false,
	escapeinside={(*}{*)},
}
\DeclareMathOperator{\AND}{\&}
\DeclareMathOperator{\A}{A}
\newcommand{\horrule}[1]{\rule{\linewidth}{#1}} 	
\title{
		\usefont{OT1}{bch}{b}{n}
		\horrule{0.5pt} \\[0.4cm]
		\huge Automatic Theorem Proving in\\
		Walnut
		\horrule{2pt} \\[0.5cm]
}
\author{
		\normalfont 								\normalsize
        Hamoon Mousavi\\[-3pt]		\normalsize
        \today
}
\date{}
\begin{document}
\maketitle
\tableofcontents
\section{Introduction}\label{sec:introduction}
Walnut is a software package that implements a \emph{mechanical
decision procedure} for deciding certain combinatorial properties of
some special words referred to as \emph{automatic words} or
\emph{automatic sequences}. To learn more about automatic words and
their applications, see \cite{Allouche&Shallit:book}. To learn about
decision procedures for automatic words, see Schaeffer's Master's
thesis \cite{Schaeffer:thesis} and the survey paper
\cite{Shallit:survey}. To read more about decidable properties of
automatic words, refer to
\cite{Charlier&Rampersad&Shallit:enumeration}. To read about another
software package that provided a similar mechanical decision procedure
for automatic words, and was developed before Walnut, read Goc's
Master's thesis \cite{Goc:thesis}. To see applications of Walnut, refer
to \cite{paperfolding,tribonacci,fib1,fib2,fib3,balanced}.

The aim of this article is to introduce Walnut and explain its core
features. This article consists of four parts: basics, syntax,
implementation, and the Walnut guide. In the first part, Section
\ref{sec:basics}, we establish the basic notation and concepts. We go
over words, automata, number systems, automatic words, and Presburger
arithmetic. We learn what it means for an automaton to accept a
predicate. We also learn how to automatically decide properties of
automatic words.

The second part, Section \ref{sec:syntax and semantic}, talks about the
building blocks of predicates: constants, variables, operators, and
different types of expressions. The semantics of predicates in
Presburger arithmetic are well-known and are not explained, whereas
semantic rules for calling and indexing, with which we extend the
Presburger arithmetic to include automatic words, are explained in
detail.

The third part, Sections \ref{sec:decision procedure} and
\ref{sec:special automata}, explains the decision procedure implemented
in Walnut. The cross product of two automata, which is behind the
construction of automata for all binary logical operators, is
introduced. Building on that, we see how to construct automata for
predicates from automata for subpredicates. In Section \ref{sec:special
automata}, we talk about two types of automata that do not appear often
in Walnut, but are nevertheless important to understand.

The fourth and last part, Sections
\ref{sec:installation}--\ref{sec:working with input and output}, starts
with Walnut's installation and goes over all of its commands, i.e.,
exit, eval, def, reg, and load. In Section \ref{sec:working with input
and output}, we learn how to manually define automata in text files. We
also learn how to define new number systems.

If you are already familiar with the objects described in the first
sentence of this introduction, you can skip Section \ref{sec:basics}
and come back to it only as a reference.  For a more comprehensive
treatment of the theory behind decision procedures for automatic words
refer to
\cite{Schaeffer:thesis,Shallit:survey,Charlier&Rampersad&Shallit:enumeration}.

Since this article is more about Walnut than the theory behind it, when
we explain the latter, we use Walnut's notation as opposed to the more
familiar mathematical notation. For example, we use $\AND$ and $\A$ for
conjunction and universal quantifier as opposed to $\wedge$ and
$\forall$ of mathematical logic \footnote{Users enter logical
predicates in a terminal when they use Walnut. We find that entering
latex-like commands in the terminal, e.g., \textbackslash forall, does
not improve the readability.}. As another example, when we define
structures such as number systems or objects such as automatic words,
we give the definitions that are closer to Walnut's capabilities than
the most general theoretical ones possible. This will help the reader make
a smoother transition from the theory to its application in Walnut.

You can download Walnut from Jeffrey Shallit's
\href{https://cs.uwaterloo.ca/~shallit/papers.html}{website}, or alternatively from \href{https://github.com/hamoonmousavi/Walnut.git}{GitHub}. Walnut is
written in Java and is open source. It is licensed under GNU General
Public License. We would appreciate it if users cite this article in
their publications. For automata minimization Walnut relies on Valmari's minimization algorithm \cite{valmari:2012}. In order to use this minimization algorithm in Walnut, we manually translated Valmari's C++ implementation \cite{dfa.minimizer} almost varbatim to Java. For converting regular
expressions to automata, Walnut relies on the automata library in
\cite{dk.brics}. We would greatly appreciate it if users report bugs to
\href{mailto:sh2mousa@uwaterloo.ca}{sh2mousa@uwaterloo.ca}. The author would like to thank Jeffrey Shallit for revising this article. 

\section{Basics}\label{sec:basics}
\subsection{Words and Automata}\label{sec:words and automata}
A word $(a_i)_{i\in I}$ for a finite, infinite, or a possibly empty
subset $I$ of natural numbers $\mathbb{N}$, is a sequence of symbols
$a_i$ over a finite set called an alphabet. The set $I$ usually equals
$\mathbb{N}$ or $\mathbb{N}_l=\{k\in \mathbb{N}:k<l \}$ for some $l$.
The set of finite and infinite words over the alphabet $\Sigma$ are
denoted by $\Sigma^*$ and $\Sigma^\omega$, respectively. The empty word
is denoted by $\epsilon$. For the finite word $w = a_0a_1\cdots
a_{l-1}$, the length $|w|$, is defined and equals $l$. We let
$\Sigma^l$ denote the set of all words over $\Sigma$ of length $l$. A
subword (sometimes called ``factor'' in the literature) is a finite and
contiguous subsequence of a word. The subword of $w$ starting at
position $i$ of length $k\geq 0$ is denoted by $w[i..i+k-1]=a_i\cdots
a_{i+k-1}$. Many interesting properties of words can be expressed in
terms of their subwords. For example, the property of having two equal
and adjacent subwords, referred to as a square, is discussed in
numerous papers in the area of combinatorics on words. The product of
two words $x$ and $y$, denoted by $xy$, is the result of concatenating
$x$ by $y$.

There are cases where our words are defined over alphabets consisting
of tuples of symbols, so let us fix our notation regarding these words.
For a word $w$ over an alphabet
$\Sigma_1\times\Sigma_2\ldots\times\Sigma_n$, we let the projection map
$\pi_j(w)$ for $1\leq j\leq n$ denote the word over $\Sigma_j$,
obtained from $w$ by looking at the $j$'th coordinates, i.e., words
$\pi_j(w)$ are uniquely defined by
$$w=\prod\limits_{i=0}^{|w|-1}\big(\pi_1(w)[i],\pi_2(w)[i],\ldots,\pi_n(w)[i]\big).$$
For example, for $w= (0,1)(1,1)(0,0)$ over $\{0,1\}\times \{0,1\}$ we
have $\pi_1(w)=010$ and $\pi_2(w)=110$.

The reader is probably familiar with the notions of deterministic and
nondeterministic finite state automata. In Walnut, an automaton $M$
with $n$ inputs (input tapes), is an $(n+4)$-tuple
$\big(Q,q_0,F,\delta,\Sigma_1,\Sigma_2,\ldots,\Sigma_n\big)$, where $Q$
is the (finite) set of states, $q_0\in Q$ is the initial state,
$F\subseteq Q$ is the set of final states,
$\delta:Q\times\Sigma_1\times\Sigma_2\times\cdots\times\Sigma_n\rightarrow
Q$ is the transition function, and $\Sigma_i$ is the alphabet of the
$i$'th input (tape). The automaton's alphabet is defined to be the
cross product $\Sigma_1\times\Sigma_2\times\cdots\times\Sigma_n$, and
the notions of accepting a word $w$ or a language over this alphabet is
defined as usual. A nondeterministic automaton is defined similarly,
except that the transition function is defined by
$\delta:Q\times\Sigma_1\times\Sigma_2\times\cdots\times\Sigma_n\rightarrow
2^Q$. In Walnut and throughout this article, the $\Sigma_i$ are finite
subsets of integers $\mathbb{Z}$.

Two automata are equal (isomorphic) if their underlying graphs are
isomorphic. Two automata are equivalent if they accept the same
language. There exists a determinization algorithm that converts a
nondeterministic automaton to an equivalent deterministic automaton.
There exists a minimization algorithm that converts an automaton to an
equivalent automaton with the least number of states (which is unique
up to isomorphism). It is known that extending the automata model by
allowing multiple initial states (similar to how there can be multiple
final states) does not add to the model's expressiveness.

Next we extend the notion of accepting languages to relations, since
the latter is more natural in Walnut:

\begin{DEFN}[relations computed by automata]\label{relation}
The relation $R\subset
\Sigma_1^*\times\Sigma_2^*\times\ldots\times\Sigma_n^*$
computed/accepted by $M$ is defined by
$$R=\big\{\big(\pi_1(w),\pi_2(w),\ldots,\pi_n(w)\big):M \text{ accepts
} w\big\}.$$ Since for every word $w$, the words $\pi_i(w)$ are all of
the same length, the relation $R$ accepted by an automaton is consisted
of tuples of the words of the same length, i.e., we have
$$R\subseteq\bigcup\limits_{l\geq
0}\big(\Sigma_1^l\times\Sigma_2^l\times\cdots\times\Sigma_n^l\big)\subset
\Sigma_1^*\times\Sigma_2^*\times\ldots\times\Sigma_n^* .$$ 
\end{DEFN}

For example, the language accepted by the following automaton is $L=(0,0)^*(1,1)(0,0)(0,1)$, whereas the relation accepted is $R=\big\{(w_1,w_2):w_1\in0^*100, w_2\in0^*101,\text{ and } |w_1|=|w_2| \big\}$:
\begin{figure}[H]
\centering
\includegraphics[scale=.5]{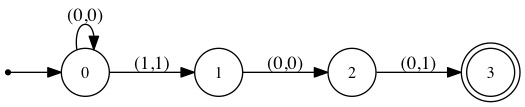}
\caption{Automaton accepting tuples of same length representations of $4$ and $5$ in binary}
\label{fig:automaton_accepting_tuples}
\end{figure}
In other words, the automaton accepts tuples $t=(w_1,w_2)\in
\{0,1\}^*\times\{0,1\}^*$ where $w_1$ and $w_2$ are representations of
the \textbf{the same length}, in the most-significant-digit-first
binary system, of natural numbers $4$ and $5$ respectively.  On the
other hand, referring to the words $w$ in $(\{0,1\}\times\{0,1\})^*$ that
are accepted by this automaton is not very descriptive. That is why, in
this article, we prefer the relation (tuple) terminology over the
language (word) terminology.

In almost all depictions of the underlying graphs of automata, such as the
one in Figure \ref{fig:automaton_accepting_tuples}, when a transition
is not specified, it is assumed to be a transition to a dead state. In
Walnut we do not store transitions to the dead state. Adding the dead
state and all implicit transitions to it, is called totalizing an
automaton.

An automaton with output is a tuple
$\big(Q,q_0,O,\delta,\Sigma,\Sigma_1,\Sigma_2,\ldots,\Sigma_n\big)$
where $Q, q_0,\delta, \Sigma_j$ are as before, the set $\Sigma$ is the
output alphabet, and, instead of a set of final states, we have a map
$O:Q\rightarrow \Sigma$. The symbol $O(q)$ is called the output of the
state $q$. An automaton with output can be thought of as an automaton
that reads a word over $\Sigma_1\times\Sigma_2\cdots\times\Sigma_n$ and
outputs whatever is the last state's output. In Walnut, the output
alphabet $\Sigma$ is a finite subset of integers. We can think of
ordinary automata as a special case of automata with output by letting
the set of final states to be $F=\{q:O(q)\neq 0\}$. This is indeed how
ordinary automata are stored in Walnut.

In the next section, we learn how to add more structure to alphabets by
defining number systems. As we saw in the example, the automaton in
Figure \ref{fig:automaton_accepting_tuples} accepts binary
representations of numbers. In a moment we will extend our definition
of automata to
$\big(Q,q_0,F,\delta,\mathbf{S_1},\mathbf{S_2},\ldots,\mathbf{S_n}\big)$,
where the $\mathbf{S_j}$ are number systems and concealed in them are
alphabets $\Sigma_{\mathbf{S_j}}$ among other things.

\subsection{Number Systems}\label{sec:number systems}
In any course on theory of computation, it is customary to talk about
the representations of the objects an algorithm/Turing machine takes as
inputs. At the core of Walnut are automata taking natural numbers as
inputs, and doing various computations on them, so fixing a
representation for natural numbers is essential. We could limit
ourselves to binary representations. However, there are many
interesting automata accepting representations in number systems other
than the binary one. So we are going to define, in general terms, the
concept of a number system. Walnut allows number systems to be defined
and used (with a few restrictions to the general definition below).

\begin{DEFN}[number systems]\label{def:number systems}
A number system $\mathbf{S}$ is a $3$-tuple $(\Sigma_\mathbf{S},R_\mathbf{S},[]_\mathbf{S})$ of alphabet $\Sigma_\mathbf{S} \supseteq \{0,1\}$, language $R_\mathbf{S} \subset \Sigma_\mathbf{S}^*$ of valid representations containing $0^*$ and at least one of $0^*1$ or $10^*$, and decoding function $[]_\mathbf{S}:R_\mathbf{S}\rightarrow \mathbb{N}$ that assigns integers to every word in $R_\mathbf{S}$ and for which $[]_\mathbf{S}(w)$ is usually written as $[w]_\mathbf{S}$. The decoding function has the following additional properties:
\begin{itemize}
\item $[z]_\mathbf{S}=0$ if and only if $z\in 0^*$
\item $[1]_\mathbf{S}=1$
\item For all $w\in R_\mathbf{S}$, either $zw\in R_\mathbf{S}$ and $[zw]_\mathbf{S}=[w]_S$ for all $z\in 0^*$, or $wz\in R_\mathbf{S}$ and $[wz]_\mathbf{S}=[w]$ for all $z\in 0^*$. The former is called an $\MSD$ number system and the latter is called an $\LSD$ number system\footnote{$\MSD$ and $\LSD$ are short for most-significant-digit-first and least-significant-digit-first, respectively. 
However, it should not be taken literally in this definition, as one
could define $\MSD$ number systems (in the sense defined here), with
no direct correspondence to the notion of most-significant-digit-first
representation.}.  \item For all positive $n\in
\mathbb{N}$, there exists $w\in R_\mathbf{S}$ for which
$[w]_\mathbf{S}=n$ and $w[0]\neq 0$ if $\mathbf{S}$ is $\MSD$ or
$w[|w|-1]\neq 0$ if $\mathbf{S}$ is $\LSD$. The word $w$, if unique, is called the
canonical encoding of $n$ in $\mathbf{S}$, and is sometimes denoted by
$(n)_\mathbf{S}$. We let $(0)_\mathbf{S}=\epsilon$.

\end{itemize}
The addition relation $+_\mathbf{S} \subset R_\mathbf{S}^3$ is defined such that $(x,y,z)\in +_\mathbf{S}$ if and only if $x,y,z$ are of the same length and $[x]_\mathbf{S}=[y]_\mathbf{S}+[z]_\mathbf{S}$. The equality relation $=_\mathbf{S}\subset R_\mathbf{S}^2$ is defined such that $(x,y)\in =_\mathbf{S}$ if and only if $x$ and $y$ are of the same length and $[x]_\mathbf{S}=[y]_\mathbf{S}$.  The less than relation is defined as $<_\mathbf{S} \subset R_\mathbf{S}^2$ for which $(x,y)\in <_\mathbf{S}$ if and only if $x$ and $y$ are of the same length and $[x]_\mathbf{S} < [y]_\mathbf{S}$. We adopt the in-order notation for $+_\mathbf{S}$, $=_\mathbf{S}$, and $<_\mathbf{S}$, i.e., we write $x=y+_\mathbf{S}z$, $x=_\mathbf{S}y$, and $x<_\mathbf{S}y$ as opposed to the more cumbersome $(x,y,z)\in +_\mathbf{S}$, $(x,y)\in =_\mathbf{S}$, and $(x,y) \in <_\mathbf{S}$ respectively. It follows from the definition that for all $n\in \mathbb{N}$, the set of representations of $n$ in $\mathbf{S}$, defined by $\{w:[w]_{\mathbf{S}}=n\}$ is non-empty.
\end{DEFN}

For example, the most-significant-digit binary system, denoted by $\MSD[2]$, is defined by $(\{0,1\},\{0,1\}^*,[]_{\MSD[2]})$ where $$[w]_{\MSD[2]} = \mathlarger{\sum\limits_{i=0}^{|w|-1}} [w[i]]_{\MSD[2]}2^{|w|-i-1},$$ e.g., $[001001]_{\MSD[2]} = 0\cdot 2^5 + 0\cdot 2^4 + 1\cdot 2^3+0\cdot 2^2+0\cdot 2^1+1\cdot 2^0=9$. For $\MSD[2]$, we are very fortunate to have simple automata computing all of its important aspects, namely, valid representations $R_{\MSD[2]}$, the addition relation $+_{\MSD[2]}$, the equality relation $=_{\MSD[2]}$, and the less-than relation $<_{\MSD[2]}$. See Figures \ref{fig:binary_representations},\ref{fig:binary_addition},\ref{fig:binary_equality}, and \ref{fig:binary_less_than} respectively. 
\begin{figure}[H]
\centering
\includegraphics[scale=.5]{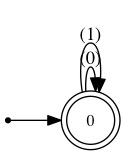}
\caption{Automaton computing $R_{\MSD[2]}$}
\label{fig:binary_representations}
\end{figure}

\begin{figure}[H]
\centering
\includegraphics[scale=.5]{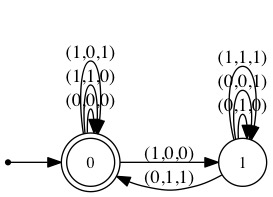}
\caption{Automaton computing $+_{\MSD[2]}$}
\label{fig:binary_addition}
\end{figure}

\begin{figure}[H]
	\centering
	\includegraphics[scale=.5]{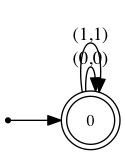}
	\caption{Automaton computing $=_{\MSD[2]}$}
	\label{fig:binary_equality}
\end{figure}

\begin{figure}[H]
\centering
\includegraphics[scale=.5]{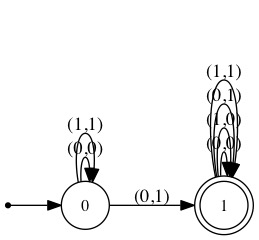}
\caption{Automaton computing $<_{\MSD[2]}$}
\label{fig:binary_less_than}
\end{figure}

We can define the least-significant-digit-first binary system, denoted
by $\LSD[2]$, in a similar way. In fact, we can define $\MSD[n]$ and
$\LSD[n]$ for all $n\geq 2$, and for all of them, there are simple
automata computing valid representations, addition, equality, and
less-than relations. In fact we can define the following:

\begin{DEFN}[number systems in Walnut]\label{def:walnut_number_system}
Number systems for which the automata for representations, addition,
equality, and less-than exist, and equality is the same as word
equality, i.e., $x=_\mathbf{S}y$ if and only if $x=y$, are exactly the
type of number systems one can define and use in Walnut. Note that the
alphabet of a number system is restricted to finite subsets of
$\mathbb{Z}$ due to the same restriction on automata in Walnut.
\end{DEFN}
In addition to base-$n$ number systems, Walnut has a
built-in definition for the Fibonacci number system.

The most-significant-digit-first Fibonacci system, denoted by $\MSD[fib]$, is defined by $(\{0,1\},0^*(\epsilon\mid 1)(0\mid 01)^*,[]_{\MSD[fib]})$ where
$$[w]_{\MSD[fib]} = \mathlarger{\sum\limits_{i=0}^{|w|-1}} [w[i]]_{\MSD[fib]}F_{|w|-i-1},$$
where $F_i$ is the $i$'th Fibonacci number given by $F_0=1,F_1=2$, and $F_i=F_{i-1}+F_{i-2}$ for $i\geq 2$. For example, $[001001]_{\MSD[fib]}= 0\cdot F_5 + 0\cdot F_4 + 1\cdot F_3+0\cdot F_2+0\cdot F_1+1\cdot F_0=6$. The set of valid representations is exactly the set of binary words avoiding consecutive $1$s. The avid reader might want to verify that $\MSD[fib]$ is a number system. There are automata computing all major aspects of $\MSD[fib]$. For example, here is the automaton accepting $R_{\MSD[fib]}$\footnote{The automaton accepting $+_{\MSD[fib]}$ has $16$ states, which is too big to be represented here.}:
\begin{figure}[H]
\centering
\includegraphics[scale=.5]{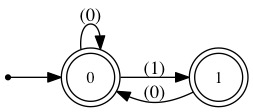}
\caption{Automaton computing $R_{\MSD[fib]}$}
\label{fig:fib_representations}
\end{figure}

In cases, where an automaton's inputs are representations of integers in
some number system, which by far are the most important type of
automata in Walnut, we would like to signify these number systems
instead of the input alphabets. For example, we might write
$\big(Q,q_0,F,\delta,\mathbf{S_1},\mathbf{S_2},\ldots,\mathbf{S_n}\big)$
to mean
$\big(Q,q_0,F,\delta,\Sigma_{\mathbf{S_1}},\Sigma_{\mathbf{S_2}},\ldots,\Sigma_{\mathbf{S_n}}\big)$.
It should be understood that in these cases, if for a word $w$ input
$\pi_j(w)$ is not a valid representation in $\mathbf{S_j}$, it does not
mean that the automaton's behavior is not defined for $w$. This just
means that $w$ is, by default, not going to get accepted. The behaviors
of both automata and automata with output that are taking
representations of numbers in some number systems as inputs are
defined for all words (even those not representing numbers in the given
number systems).

\subsection{Automatic Words}\label{sec:automatic words}
An automatic word $W=(a_i)_{i\geq 0}$ is a word in $\Sigma^\omega$ for
which there exists a number system $\mathbf{S}$ and an automaton with
output $M\big(Q,q_0,O,\delta,\Sigma,\mathbf{S}\big)$ for which reading
$x\in R_{\mathbf{S}}$ outputs $W[[x]_\mathbf{S}] = a_{[x]_\mathbf{S}
}$. In other words, for an automatic word, the symbol at position $i$
for all $i$ can be effectively computed by running an automaton with
output on any single representation of $i$ in a number system. As usual
we assume $\Sigma$ is a finite subset of $\mathbb{Z}$.

The word $T$ for which the symbol at position $i$, is the number of $1$s in any binary representation of $i$, modulo $2$, is called the Thue-Morse word. The Thue-Morse word is well-defined since all the infinitely many different binary representations of an integer have the same number of $1$'s. It is instantly clear that $T$ is an automatic word over $\MSD[2]$ if one notes the automaton with output in Figure \ref{fig:thue}.
\begin{figure}[h]
	\centering
	\includegraphics[scale=.6]{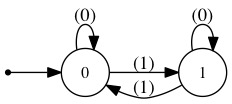}
	\caption{The Thue-Morse word}
	\label{fig:thue}
\end{figure}

In the introduction, we mentioned that Walnut decides some properties of automatic words. Recall from Section \ref{sec:words and automata} that squares are non-empty words of the form $xx$. It is easy to see that $T$ has square subwords. The following predicate captures this property:
$$\exists{i} \exists{n} \forall{j}, j<n \implies T[i+j]=T[i+n+j].$$

Walnut provides a decision procedure that takes predicates like this
and decides whether they are true or false. Walnut does so, by
constructing automata for every subpredicate in the predicate above;
see Section \ref{sec:automata accepting predicates} for more details.
It starts by constructing from the automaton in Figure \ref{fig:thue}
an automaton $M_1\big(Q,q_0,F,\delta,\MSD[2],\MSD[2],\MSD[2]\big)$ for
subpredicate $P_1\coloneqq T[i+j]=T[i+j+n]$. This means (see Section
\ref{sec:automata accepting predicates}) that $M_1$ is constructed so
that it accepts tuples $t=(w_1,w_2,w_3)$ if and only if
$|w_1|=|w_2|=|w_3|$ and substitutions $i=[w_1]_{\MSD[2]}$,
$j=[w_2]_{\MSD[2]}$, and $n=[w_3]_{\MSD[2]}$ are satisfying $P_1$.
Walnut then using $M_1$ constructs an automaton $M_2$ for
$P_2\coloneqq\forall{j}, j < n \implies T[i+j]=T[i+n+j]$. The automaton
$M_2$ takes two inputs representing the two free variables $i$ and $n$
in $P_2$. Walnut continues by constructing the automaton $M_3$ for
\linebreak$P_3\coloneqq\exists{n} \forall{j}, j < n \implies
T[i+j]=T[i+n+j]$. In the end, Walnut returns true if $M_3$ accepts
anything. The fact that $M_1$,$M_2$, and $M_3$ exist is explained in
Section \ref{sec:automata accepting predicates}. The details of how
Walnut constructs these automata are explained in Section
\ref{sec:decision procedure}. The details of what comprises a valid
predicate is explained in Section \ref{sec:syntax and semantic}. To see
more examples of the properties of the Thue-Morse word and their proofs
see Section \ref{sec:eval}.

We can extend the definition of automatic words to higher dimensions.
The ($n$-dimensional) automatic word
$$W=\big(a_{i_1,i_2,\ldots,i_n}\big)_{i_1\geq 0,i_2\geq 0,\ldots,
i_n\geq 0}$$ is an infinite word over $\Sigma$ for which there
exist number systems $\mathbf{S_j}$ and an automaton with output
$$M\big(Q,q_0,O,\delta,\Sigma,\mathbf{S_1},\mathbf{S_2},\ldots,\mathbf{S_n}\big)$$
for which reading $x$, such that $\pi_j(x) \in R_{\mathbf{S_j}}$ for
all $j$, outputs
$$W\big[[\pi_1[x]]_{\mathbf{S_1}}\big]\big[[\pi_2[x]]_{\mathbf{S_2}}\big]\cdots\big[[\pi_n[x]]_{\mathbf{S_n}}\big]
=
a_{[\pi_1[x]]_{\mathbf{S_1}},[\pi_2[x]]_{\mathbf{S_2}},\ldots,[\pi_n[x]]_{\mathbf{S_n}}}.$$

\subsection{Automata accepting Predicates}\label{sec:automata accepting predicates}
In Walnut, we are interested in automaton $M$ accepting same-length
representations in number systems
$\mathbf{S_1},\mathbf{S_2},\ldots,\mathbf{S_n}$ of integers
$x_1,x_2,\ldots,x_n$ satisfying some predicate $P$. When this is the
case we say that automaton $M$ accepts the predicate $P$ (or
equivalently$M$ accepts relation $R$ of tuples satisfying $P$). We
already saw a few examples of such automata in Figures
\ref{fig:automaton_accepting_tuples}--\ref{fig:fib_representations}.
From \cite{Buchi:1966}, also see \cite{Schaeffer:thesis}, and as it
will be proved again in Section \ref{sec:decision procedure}, for
predicate $P$ in Presburger arithmetic such an automaton always exists.
Presburger arithmetic is the first-order theory of natural numbers, in
which predicates are consisted of constants (natural numbers),
variables over natural numbers, existential quantifiers, universal
quantifiers, logical operators (conjunction, disjunction, negation,
exclusive disjunction, implication, equivalence), arithmetic operators
(addition, subtraction, multiplication and division by constants), and
comparison operators (equality, less than, greater than, less than or
equal, greater than or equal)\footnote{Presburger arithmetic in its
formal definition recognizes only a minimal subset of constants and
operators: $0$,$1$,$+$,$=$,$<$,$\forall$, but it is not difficult to
show that all the other objects and operators we mentioned, e.g.,
multiplication by constants, does not add to the power of Presburger
arithmetic and can be derived from that minimal set of objects. See
Section \ref{sec:arithmetic and alphabetic constants} for more details.
One thing to note here is that subtraction $a-b$ exists only when there
exists a non-negative number $c$ for which $b+c=a$.}.

You can find the list of all operators in table \ref{tab:operators}.
This list has three operators, namely, reverse\ReverseBlack,indexing
$[]$, and calling $\$$, that are not allowed in Presburger arithmetic.
By indexing we mean indexing into an automatic word, e.g., writing
things like $W[i+j]=W[i+n+j]$; see Section \ref{sec:indexing
expressions} for more details. In
\cite{Shallit:survey},\cite{Charlier&Rampersad&Shallit:enumeration},\cite{Schaeffer:thesis},
and also in Section \ref{sec:indexing to an automatic word} we learn
that extending Presburger arithmetic to include indexing is still
decidable. In Section \ref{sec:calling expressions} we learn about
calling and in Section \ref{sec:calling an automaton} we learn that it
is just a syntactic sugar and does not add to the power of the extended Presburger
arithmetic (one that includes indexing into automatic words). We learn about reverse operation in Section
\ref{sec:complement and reverse}. From here on, by ``predicate'' we mean a
predicate over this extended Presburger arithmetic (extended to include
indexing into automatic words) and until we see the proof in Section
\ref{sec:decision procedure}, we accept the fact that there exist
automata accepting such predicates.

In Section \ref{sec:syntax and semantic} we formally define what constitutes a predicate, but first let us see a few examples:
\begin{itemize}
\itemsep0em
\item $P_1\coloneqq a=4\mathbin{\&}b=5$
\item $P_2\coloneqq a=b+c$
\item $P_3\coloneqq\text{A}x\text{ }\text{E}y\text{ } x=2*y \mid x=2*y+1$
\item $P_4\coloneqq T[i+j]=T[i+n+j]$
\end{itemize}

We adopt the terminology of free variables from mathematical logic,
i.e., a variable that is not bound to a quantifier (quantified). For
example $P_3$ has no free variables, and can be regarded as a constant,
in this case it is always true.

We have seen that, given a predicate $P$, for any ordering
$x_1,x_2,\ldots,x_n$ of free variables and for every assignment of
number systems $\mathbf{S_1},\mathbf{S_2},\ldots,\mathbf{S_n}$ to those
variables, there exists an automaton $M$ accepting such a predicate,
i.e., a tuple of same length words $t=(w_1,w_2,\ldots,w_n)$ is accepted
by $M$ if and only if the substitutions $x_i=[w_i]_{\mathbf{S_i}}$ satisfy
$P$.

For example, consider the predicate $P_1$. The automaton in Figure
\ref{fig:automaton_accepting_tuples} accepts $P_1$. Furthermore there
exists automaton $M$ accepting tuples $(x,y)$ for which $|x|=|y|$ and
substitutions $a=[y]_{\MSD[2]}$, and $b=[x]_{\MSD[2]}$ are satisfying
$P_1$. There also exists an automaton $N$ accepting tuples $(x,y)$ for
which $|x|=|y|$ and substitutions $a=[x]_{\MSD[fib]}$ and
$b=[y]_{\LSD[2]}$ are satisfying $P_1$. By definition, both $M$ and $N$
also accept the predicate $P_1$.

We would like to annotate predicates so that they contain information
on number systems without ambiguity (we will see how shortly). For such
an annotated predicate $P$ and the ordering $x_1,x_2,\ldots,x_n$ on
free variables, there exists a unique minimized automaton accepting the
predicate. We denote this unique automaton by $$(x_1,x_2,\ldots,x_n):
P.$$

\emph{The ordering we fix on variables, in Walnut and throughout this
article, is the lexicographic ordering on the variables' name.}

The following are examples of annotated predicates\footnote{Names for variables, words, and automata in Walnut start with a letter and can contain alphanumerics and underscores. So to distinguish number system annotations in a predicate we use the prefix $?$.}:
\begin{itemize}
\itemsep0em
\item $P'_1\coloneqq\text{?msd\_2 } a=4\mathbin{\&}b=5$
\item $P'_2\coloneqq\text{?msd\_fib } a=b+c$
\end{itemize}
From the annotated predicate $P'_1$ we understand that $a,b,4,5$ should
all be interpreted in $\MSD[2]$ and $=$ should be interpreted as
$=_{\MSD[2]}$. Hence $(a,b): P'_1$ is the automaton accepting $\MSD[2]$
representations of $4$ and $5$ as its first and second inputs
respectively. Also from annotation ?msd\_fib in $P'_2$ it is clear what
to expect from automaton $(a,b,c):P'_2$.

We can annotate a predicate with multiple number systems, e.g., see
Figure \ref{fig:one_and_one}. Here are the rules with which we assign
number systems to constants, variables, and operators in a predicate:

\begin{itemize}
\item If ?S appears inside a pair of parentheses or brackets, then the number system $\mathbf{S}$ is effective from the place it occurs in the predicate to the nearest closing parenthesis or bracket\footnote{Brackets $[]$ only appear in indexing expressions. See Sections \ref{sec:syntax and semantic} and \ref{sec:indexing expressions} for more details.}.
\item If ?S appears outside all parentheses and brackets, then the number system $\mathbf{S}$ is effective from the place it occurs in the predicate to the end of predicate.
\item If none of the rules above applies, the number system is assumed to be $\MSD[2]$ by default.
\item\label{itm:number_system_declarations} It is assumed that the number systems do not contradict each other, i.e., a single variable cannot have two different number systems in one predicate, and all operands of an arithmetic or comparison operator must belong to the same number system. 
\end{itemize}

We saw in Figure \ref{fig:automaton_accepting_tuples}, the \emph{unique} automaton $(a,b):P'_1$. In Figure \ref{fig:fourAndThirteen}, we see the automaton $(a,b):a=4\mathbin{\&}b=13$ (recall that when the number system is not specified it is assumed to be $\MSD[2]$):
\begin{figure}[H]
	\centering
	\includegraphics[scale=.5]{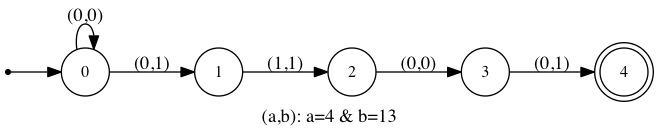}
	\caption{The automaton accepting $a=4\mathbin{\&}b=13$, does not accept all representations of $4$}
	\label{fig:fourAndThirteen}
\end{figure}

Note how this automaton fails to accept $t=(100,w_2)$ for any $w_2$.
This is obviously due to the fact that $13$ does not have a
representation of length $3$ in $\MSD[2]$. So we stress again that when
we say automaton $M$ accepts predicate $P$, we mean that $M$ accepts
all (tuples of) \emph{equal length representations} of $x_1,\ldots,x_n$
satisfying $P$. Therefore this example conforms to the definition.

Let us see an example of an automaton having multiple number systems. Figure \ref{fig:one_and_one} depicts the automaton \linebreak $(a,b): a=1\mathbin{\&}(\text{?lsd\_2 }b=1)$. 
\begin{figure}[h]
	\centering
	\includegraphics[scale=.5]{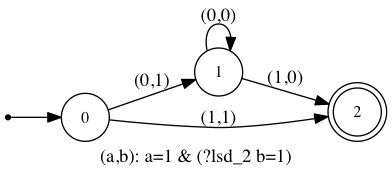}
	\caption{Automaton accepting $a=1\mathbin{\&}(\text{?lsd\_2 }b=1)$}
	\label{fig:one_and_one}
\end{figure}

\section{Syntax and Semantic of Predicates in Walnut}\label{sec:syntax and semantic}
\subsection{Alphabets}\label{sec:alphabets}
We mentioned in earlier sections that all input and output alphabets of
automata are subsets of integers in Walnut. Specifically for any
automatic word $W$, we can assume $W[i]$ is an integer.

\subsection{Arithmetic and Alphabetic Constants}\label{sec:arithmetic and alphabetic constants}
Arithmetic constants in a predicate are allowed to be natural
numbers only. There is, however, another type of constant:  the alphabetic
constant. Alphabetic constants are useful when referring to symbols at
particular positions in automatic words. For example, the predicate that
accepts positions for which the automatic word $W$ is $1$ is written as
$W[i]=@1$. In order to draw the distinction between alphabetic and
arithmetic constants, we use alphabetic constants with a prefix of $@$.
The reason we call these constants alphabetic (as opposed to
arithmetic) is due to the fact that Walnut does not allow (and it does
not make much sense to allow) predicates that are comparing indexing
expressions \ref{sec:indexing expressions} and arithmetic expressions
\ref{sec:arithmetic expressions}, e.g., expressions such as
$W[i]=a+b$ is not allowed. As
we will see in Section \ref{sec:relative expressions}, the only objects
that can be compared with indexing expressions are alphabetic constants
and indexing expressions themselves.

Alphabetic constants are ordered like ordinary integers, so we can
compare alphabetic constants, just like we can compare arithmetic
constants. For example, $@-1<@1$ is a valid predicate, and it is always
true; see Sections \ref{sec:indexing expressions} and \ref{sec:relative
expressions}. \emph{However, we cannot use alphabetic constants in arithmetic
expressions.}

\subsection{Variables}\label{sec:variables}
A variable's name must start with a letter and can contain
upper- and lower-case 
alphanumerics and underscores. A variable's name cannot be E or A.

\subsection{Operators}\label{sec:operators}
The full list of operators allowed in predicates can be found in Table
\ref{tab:operators}\footnote{we prefer this notation to those familiar
from mathematical logic, because we want to liken our notation to those
of programming languages, as Walnut is ultimately a programming
language.}. This list has operator precedences. The lower this number
is, the higher the precedence is. For example, multiplication by constant
has the highest precedence. Parentheses override all precedences. All
operators are associative from left to right, except for complement
$\sim$, reverse \ReverseBlack, quantifiers E and A, calling \$, and
indexing $[]$ which are all associative from right to left.

\begin{table}[h]
	\centering
	\begin{tabular}{ | m{1.7cm} | m{1.4cm}| m{4cm} | m{5cm}|} 
		\hline
		precedence & operator & explanation & examples\\
		\hline
		1 & $*$ & multiplication by a constant & $2*x$ and $x*2$ \\ 
		\hline
		1 & $/$ & division by a constant & $x/2$ but not $2/x$\\ 
		\hline
		2 & $+$& addition & \\ 
		\hline
		2 & $-$& subtraction & \\
		\hline
		3 &   $=$ & equality& \\
		\hline
		3 &  !=  & inequality& \\
		\hline
		3 &  $<$ & less than& \\
		\hline
		3 &  $>$  & greater than& \\
		\hline
		3 & <= & less than or equal& \\
		\hline
		3 &  >= & greater than or equal& \\
		\hline
		4 &  $\sim$  & complement&\\
		\hline
		4 &   \ReverseBlack  & reverse&\\
		\hline
		5 &  $\&$ & conjunction&\\
		\hline
		5 &  $\mid$  & disjunction&\\
		\hline
		5 &  $\wedge$ & exclusive disjunction&\\
		\hline
		6 &  => & implication&\\
		\hline
		7 & <=> & equivalence&\\
		\hline
		8 &   E  & existential quantifier & $\text{E}x,y,z$ or $\text{E}x\text{ E}y\text{ E}z$\\
		\hline
		8 &   A & universal quantifier & $\text{A}x,y,z$ or $\text{A}x\text{ A}y\text{ A}z$\\
		\hline
		9 & $\$$ & calling &$\$\text{M}(x,y)$ \\
		\hline
		9 & $[]$ & indexing & $T[i+j]$\\
		\hline
	\end{tabular}
	\caption{List of operators in Walnut}
	\label{tab:operators}
\end{table}

\subsection{Arithmetic Expressions}\label{sec:arithmetic expressions}
The permissible arithmetic operators are $+,-,*,/$. Equality $=$ is {\it not\/}
an arithmetic operator. A constant expression is an expression
involving only constants and arithmetic operators that evaluates to a
natural number, e.g., $4,3+2,6/4,2*3$ but not $-3$ nor $2-3$.  An
arithmetic expression is defined recursively in the usual way:

\begin{itemize}
\item A constant expression is an arithmetic expression, e.g., $2$,$10$,$7-4$, but not $-1$.
\item A variable is an arithmetic expression, e.g., $x,y,z$,etc.
\item For arithmetic expression $e$, the expression $(e)$ is also arithmetic.
\item For arithmetic expression $e_1$ and $e_2$ both of $e_1+e_2$ and $e_1-e_2$ are arithmetic expressions.
\item For variable $x$ and constant expression $c$ all of $x*c$,$c*x$, and $x/c$ are arithmetic expressions.
\item For arithmetic expression $e$ and constant expression $c$ all of $(e)*c$, $c*(e)$, and $(e)/c$ are arithmetic expressions. 
\end{itemize}
An arithmetic expression on its own is not a predicate, and it is not
meaningful to talk about an automaton accepting an arithmetic
expression. For example, talking about an automaton accepting $x+y+z=0$
makes sense, while talking about an automaton accepting $x+y+z$ is not
meaningful.  Walnut reports an error if the user tries to construct an
automaton for an arithmetic expression.

See Section \ref{sec:arithmetic and comparison} to see how Walnut
constructs automaton for valid predicates like $$(y_1 \otimes y_2
\otimes \cdots \otimes y_m) \olessthan (x_1 \otimes x_2 \otimes \cdots
\otimes x_n),$$ where the $x_i$ and $y_j$ are variables or arithmetic
constants, $\otimes$ are arithmetic operators, and $\olessthan$ is a
comparison operator.

\subsection{Indexing Expressions and Their Semantic Rules}\label{sec:indexing expressions}
For an $n$-dimensional automatic word $W$, an indexing expression is
$W[e_1][e_2]\cdots[e_n]$ where the $e_i$ are either arithmetic
expressions or predicates with one free variable.

An indexing expression on its own is not a valid predicate, and it is
not meaningful to talk about automata accepting indexing expressions.
Smallest predicates involving indexing expressions are defined in
Section \ref{sec:relative expressions} and they involve comparison
operators.

We use indexing expressions to refer to positions indicated by $e_i$.
The semantic of predicates involving indexing expressions can be
derived from the following rule:

\begin{DEFN}[semantic rule regarding indexing]
Suppose automatic word $W$, expressions $e_1,e_2,\ldots,e_n$ where 
the $e_i$ are either arithmetic expressions or predicates with one free variable, free variables $x_1,x_2,\ldots,x_m$ occurring in the $e_i$, and an alphabetic constant $\alpha$ are given. Predicate $W[e_1][e_2]\cdots[e_n]=@\alpha$ is satisfied by substitutions $x_k=v_k$ for all $k$, if all of the following hold:
\begin{itemize}
\item If $e_i$ is an arithmetic expression, then $a_i$ is the value of the
$e_i$ when evaluated at $x_k=v_k$ for all $k$.
\item If $e_i$ is a predicate with one free variable, then it is satisfied by substitutions $x_k=v_k$ for all $k$. Let $a_i$ equals $v_k$ when $x_k$ is the free variable in $e_i$.
\item The symbol $W[a_1][a_2]\ldots[a_n]$ equals $\alpha$.
\end{itemize}
\end{DEFN}

Having this rule, coming up with similar rules for other comparison
operators, e.g., $W[e_1][e_2]\cdots[e_n]<@\alpha$, and even predicates
involving comparison of two automatic words, e.g.,
$W_1[e_1][e_2]\cdots[e_m]\text{>=}W_2[e'_1][e'_2]\cdots[e'_n]$, should
be straightforward. Recall that alphabetic constants are ordered just
like integers.

\subsection{Calling Expressions and Their Semantic Rules}\label{sec:calling expressions}
For an automaton $M$ with $n$ inputs a calling expression is
$\$M(e_1,e_2,\ldots,e_n)$ where the $e_i$ are either arithmetic
expressions or predicates with one free variable. For such an
expression, we say that $M$ is called with arguments
$e_1,e_2,\ldots,e_n$. A calling expression on its own is a valid
predicate, as we will see in Section \ref{sec:relative expressions}.

\begin{DEFN}[semantic rule regarding calling]
Suppose $M$ is the automaton $y_1,y_2,\ldots,y_n: P$ for some predicate
$P$. Suppose expressions $e_1,e_2,\ldots,e_n$ where the $e_i$ are
either arithmetic expressions or predicates with one free variable, and
free variables $x_1,x_2,\ldots,x_m$ occurring in the $e_i$ are given.
Predicate $\$M(e_1,e_2,\cdots,e_n)$ is
satisfied by substitutions $x_k=v_k$ for all $k$, if all of the
following hold:
	\begin{itemize}
		\item If $e_i$ is an arithmetic expression, then $a_i$ is the
		value of $e_i$ when evaluated at $x_k=v_k$ for all $k$.
		\item If $e_i$ is a predicate with one free variable, then it is satisfied by substitutions $x_k=v_k$ for all $k$. Let $a_i$ equals $v_k$ when $x_k$ is the free variable in $e_i$.
		\item $P$ is satisfied by substitutions $y_i=a_i$ for all $i$. 
	\end{itemize}
\end{DEFN}

\subsection{Relative Expressions}\label{sec:relative expressions}
Comparison operators are $=$,!=,$<$,$>$,<=, and >=. A relative expression is any of the following:
\begin{itemize}
\item An expression $e_1 \olessthan e_2$ where $e_1$ and $e_2$ are arithmetic expressions and $\olessthan$ is any comparison operator.
\item An expression $e_1 \olessthan e_2$ where $e_1$ and $e_2$ are indexing expressions and/or alphabetic constants and $\olessthan$ is any comparison operator.
\item A calling expression is a relative expression.
\end{itemize}
We stress that $W[a]=b+2$ is not a relative expression based on the definition above, since $W[a]$ is an indexing expression and $b+2$ is an arithmetic expression. 
We will see shortly that any relative expression is a predicate.
Section \ref{sec:arithmetic and comparison} explains how to construct
automata accepting relative expressions.

\subsection{Predicates}\label{sec:predicates}
A predicate is an expression formed from relative expressions and logical operators:
\begin{itemize}
\item Every relative expression is a predicate.
\item For every predicate $P$ all of $(P)$, ${\sim (P)}$ and ${\text{\textasciigrave}(P)}$ are predicates. 
\item For every predicate $P_1$ and $P_2$ all of $P_1 \mathbin{\&} P_2$, $P_1 \mid P_2$, $P_1 \wedge P_2$, $P_1 \mathbin{\text{ => }} P_2$, $P_1 \mathbin{\text{ <=> }} P_2$ are predicates.
\item For every predicate $P$ and free variables $x_1,x_2,\ldots,x_n$ both of $\text{E}x_1,x_2,\ldots,x_n\text{ }P$ and $\text{A}x_1,x_2,\ldots,x_n\text{ }P$ are predicates.
\end{itemize}
The semantic rules with which we assign true and false values to
predicates defined here can be obtained by adding the semantic rules
for indexing and calling to the well-known semantics of first-order
logic and Presburger arithmetic.

Walnut provides two commands for converting predicates to automata
accepting them: eval and def; see Sections \ref{sec:eval} and
\ref{sec:def}, respectively.

\section{Decision Procedure: Walnut's Implementation}\label{sec:decision procedure}
In this section, we learn about a procedure that takes a predicate and
constructs an automaton accepting that predicate. The procedure
explained here is what implemented in Walnut, and we shall call it the
decision procedure.

For every defined number system, Walnut knows the automata for valid
representations, addition, equality, and less-than
predicates/relations. Every predicate is ultimately built out of these
four predicates using logical operators. So we only need to explain the
construction of automata for complex predicates from automata for
simpler subpredicates. We start by explaining cross product in Section
\ref{sec:cross product}, which is the core object when constructing
automata for predicates formed from binary logical operators, i.e.,
$\&$,$\mid$,$\wedge$,=>,<=>. Then we move on to quantification in
Section \ref{sec:quantification}, explaining the construction of
automata for predicates formed from E and A operators. In Section
\ref{sec:complement and reverse}, we discuss construction of automata
for the complement $\sim$ and reverse \ReverseBlack operators. With
these tools at our disposal, we are on the right track to construct
automata for complex predicates formed from comparison and arithmetic
operators, e.g., $*$,$/$,$>$,<=,etc. which we explain in Section
\ref{sec:arithmetic and comparison}.

\subsection{Cross Product}\label{sec:cross product}
Let $M\big(Q,q_0,F,\delta,\mathbf{S_1},\ldots,\mathbf{S_m}\big)$ and
$M'\big(Q',q'_0,F',\delta',\mathbf{S'_1},\ldots,\mathbf{S'_n}\big)$ be
the automaton $(x_1,\ldots,x_m):P$ and $(x'_1,\ldots,x'_n):P'$
respectively. Let us assume that if $x_i=x'_j$ then
$\mathbf{S_i}=\mathbf{S'_j}$. Let $\{x''_1,\ldots,x''_p\}$ where $p\leq
m+n$ be the union of $\{x_1,\ldots,x_m\}$ and $\{x'_1,\ldots,x'_n\}$
and further assume that the $x''_i$ are appearing in lexicographic
order. Depending on whether $x''_k=x_i$ or $x''_k=x'_j$, let
$\mathbf{S''_k}$ denote $\mathbf{S_i}$ or $\mathbf{S'_j}$ respectively.
Then the cross product of $M$ and $M'$ denoted by $M\times M'$ is the
tuple $$\big(Q\times
Q',(q_0,q'_0),\delta'',\mathbf{S''_1},\ldots,\mathbf{S''_p}\big)$$
where the transition function is defined to be
$$\delta''\big((q,q'),(\gamma_1,\ldots,\gamma_p)\big)=\big(\delta(q,(\alpha_1,\ldots,\alpha_m)),\delta'(q',(\beta_1,\ldots,\beta_n))\big)$$
for $\gamma_k$ equals $\alpha_i$ or $\beta_j$ depending on whether
$x''_k=x_i$ or $x''_k=x'_j$ respectively. Note that $M\times M'$ is not
an automaton since a set of final states is not specified. For
$F''\subseteq Q\times Q'$, let $(M\times M')(F)$ denote the automaton
$\big(Q\times
Q',(q_0,q'_0),F'',\delta'',\mathbf{S''_1},\ldots,\mathbf{S''_p}\big)$.

\begin{THM}\label{thm:cross product}
For $F'' = \big\{(q,q'):q\in F \text{ and } q'\in F'\big\}$, the
automaton $(M\times M')(F'')$ accepts predicate $P\mathbin{\&}P'$.
Furthermore, minimizing $(M\times M')(F'')$, we obtain automaton
$(x''_1,\ldots,x''_p):P\mathbin{\&}P'$.
\end{THM}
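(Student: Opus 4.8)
The plan is to prove the statement in two parts, mirroring its two sentences. First I would establish that $(M\times M')(F'')$ accepts the predicate $P\mathbin{\&}P'$, which amounts to showing that the relation it computes is precisely the set of tuples satisfying both $P$ and $P'$. Then I would invoke the uniqueness of the minimized automaton accepting an annotated predicate (established earlier in the excerpt) to conclude that minimizing $(M\times M')(F'')$ yields $(x''_1,\ldots,x''_p):P\mathbin{\&}P'$.

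For the first part, I would argue by tracing a run of the cross product automaton. Fix a word $w$ over the combined alphabet $\Sigma_{\mathbf{S''_1}}\times\cdots\times\Sigma_{\mathbf{S''_p}}$, and let $u$ and $v$ be the words obtained by projecting $w$ onto the coordinates corresponding to the variables of $M$ and $M'$ respectively (using the index correspondence $x''_k = x_i$ or $x''_k = x'_j$ that defines $\gamma_k$ from $\alpha_i$ or $\beta_j$). The key observation is that by the definition of $\delta''$, reading $w$ in $M\times M'$ starting from $(q_0,q'_0)$ visits exactly the pair of states $(q,q')$ where $q$ is the state reached by reading $u$ in $M$ and $q'$ is the state reached by reading $v$ in $M'$; this is a routine induction on the length of $w$. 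Consequently $w$ is accepted by $(M\times M')(F'')$ if and only if the final pair lies in $F''$, i.e. $q\in F$ \emph{and} $q'\in F'$, which by the definition of $F$ and $F'$ happens exactly when $M$ accepts $u$ and $M'$ accepts $v$. By Definition~\ref{relation} this translates into the statement that the projected tuple satisfies $P$ and simultaneously satisfies $P'$, hence satisfies $P\mathbin{\&}P'$.

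The point requiring the most care is the bookkeeping around shared variables and the assumption $x_i = x'_j \implies \mathbf{S_i}=\mathbf{S'_j}$. When a variable is shared, a single coordinate $\gamma_k$ of $w$ feeds both an $\alpha_i$ entry of $M$ and a $\beta_j$ entry of $M'$, so I must check that the projections $u$ and $v$ are well-defined and that a shared variable is assigned a single consistent value by $u$ and by $v$. This is exactly where the same-number-system hypothesis is needed: it guarantees the two automata interpret the shared coordinate over the same alphabet $\Sigma_{\mathbf{S_i}}=\Sigma_{\mathbf{S'_j}}$, so the tuple $(\pi_1(w),\ldots,\pi_p(w))$ decodes to a single assignment of the variables $x''_1,\ldots,x''_p$ that is simultaneously an assignment for the free variables of $P$ and of $P'$. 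I would also note that all coordinates of $w$ have the same length, so the equal-length condition built into Definition~\ref{relation} is respected throughout.

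For the second part, once it is shown that $(M\times M')(F'')$ accepts $P\mathbin{\&}P'$ with free variables listed in the lexicographic order $x''_1,\ldots,x''_p$ and with the number systems $\mathbf{S''_k}$ inherited coordinatewise, the conclusion is immediate: the excerpt already asserts that for an annotated predicate together with the lexicographic ordering on its free variables there is a \emph{unique} minimized automaton accepting it, denoted $(x''_1,\ldots,x''_p):P\mathbin{\&}P'$. Since minimization produces an equivalent automaton with the least number of states, unique up to isomorphism, minimizing $(M\times M')(F'')$ must coincide with that canonical automaton. I expect the main obstacle to be purely notational rather than conceptual, namely keeping the three index maps $k\mapsto i$, $k\mapsto j$, and the projection correspondence aligned while verifying the run-tracing induction; the logical content itself is a direct unwinding of the definitions of $\delta''$ and $F''$.
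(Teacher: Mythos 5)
Your proposal is correct and follows essentially the same route as the paper's proof: the paper establishes the result through a chain of equivalences (acceptance by $(M\times M')(F'')$ $\Leftrightarrow$ a path to some $(q,q')\in F''$ $\Leftrightarrow$ separate accepting paths in $M$ and $M'$ $\Leftrightarrow$ satisfaction of $P$ and of $P'$ $\Leftrightarrow$ satisfaction of $P\mathbin{\&}P'$), which is exactly your run-decomposition argument, and it likewise opens by noting that the shared-variable/number-system hypothesis matches the consistency condition for annotations in $P\mathbin{\&}P'$. The only difference is cosmetic: you make the induction behind the run decomposition and the uniqueness-of-minimization step explicit, whereas the paper states the former as one equivalence and leaves the latter implicit.
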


\begin{proof}
Based on the definition for cross product, for $M\times M'$ to be
defined, the same variables in $P$ and $P'$ have to have the same
number systems assigned in $P$ and $P'$. But that is exactly the same
condition that needs to hold for number system annotations in
$P\mathbin{\&}P'$ to be consistent (in the sense defined in the last
bullet in Page \pageref{itm:number_system_declarations}).

Let $t=(w_1,\ldots,w_m)$ and $t'=(w'_1,\ldots,w'_n)$ such that $w_i \in
\Sigma_{\mathbf{S_i}}^*$ and $w'_j\in\Sigma_{\mathbf{S'_j}}^*$ where
$|w_i|$ and $|w'_j|$ are all equal and $w_i=w'_j$ whenever $x_i=x'_j$.
Let $t''=(w''_1,\ldots,w''_p)$ such that $w''_k=w_i$ or $w''_k=w'_j$
depending on whether $x''_k=x_i$ or $x''_k=x'_j$.

We have the following equivalent statements:
\begin{enumerate}
\item $(M\times M')(F'')$ accepts $t''$.
\item There is a path from $(q_0,q'_0)$ to $(q,q')\in F''$ in $M\times M'$ reading $t''$.
\item There is a path from $q_0$ to $q$ in $M$ reading $t$, and there is a path from $q'_0$ to $q'$ in $M'$ reading $t'$.
\item $M$ accepts $t$ and $M'$ accepts $t'$.
\item $P$ is satisfied by substituting $x_i=[w_i]_{\mathbf{S_i}}$ for all $i$, and $P'$ is satisfied by substituting $x'_j=[w'_j]_{\mathbf{S'_j}}$ for all $j$.
\item $P\mathbin{\&}P'$ is satisfied by substituting $x''_k=[w''_k]_{\mathbf{S''_k}}$.
\end{enumerate}
\end{proof}

Obviously both the construction of cross product and minimizing
automata can be carried out using algorithmic procedures. Therefore
Theorem \ref{thm:cross product} gives us a procedure for constructing
the automaton for conjunction.

With proper definitions for $F''$, we have similar theorems for
$P\otimes P'$ when $\otimes$ is any other binary logical operator.

Let us construct the automaton $(a,b):a=1\mathbin{\&}b=2$ from $(a):a=1$ in
Figure \ref{fig:one} and $(b):b=2$ in Figure \ref{fig:two}.

\begin{figure}[h]
\centering
\begin{minipage}[t]{.5\textwidth}
 	\centering
 	\includegraphics[scale=.5]{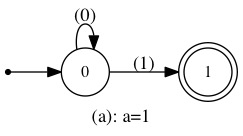}
 	\caption{Automaton $(a):a=1$}
 	\label{fig:one}
\end{minipage}%
\begin{minipage}[t]{.5\textwidth}
 	\centering
 	\includegraphics[scale=.5]{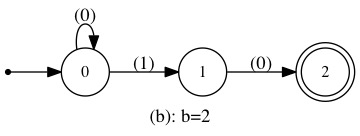}
 	\caption{Automaton $(b):b=2$}
 	\label{fig:two}
\end{minipage}%
\end{figure}
 Recall that transitions not depicted are transitions to a dead state. The
cross product operation is depicted below:
 \begin{figure}[H]
 	\centering
 	\includegraphics[scale=.5]{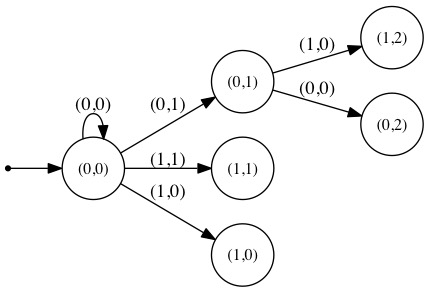}
 	\caption{Cross product $(a):a=1\times(b):b=2$}
 	\label{fig:cross_product}
 \end{figure}
 
Making $(1,2)$ a final state, minimizing, and renaming the states, we get the automaton in Figure \ref{fig:minimized}.
 \begin{figure}[H]
 	\centering
 	\includegraphics[scale=.5]{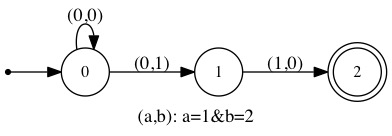}
 	\caption{Automaton $(a,b):a=1\mathbin{\&}b=2$}
 	\label{fig:minimized}
 \end{figure}

\subsection{Quantification}\label{sec:quantification}
In this section we learn how to construct an automaton
$(x_1,\ldots,x_{i-1},x_{i+1},\ldots,x_m): \text{E}x_{i}\text{ }P$ from
automaton $(x_1,\ldots,x_m): P$. Let
$M\big(Q,q_0,F,\delta,\mathbf{S_1},\ldots,\mathbf{S_m}\big)$ be the
automaton $(x_1,\ldots,x_m):P$ and let $P'$ be the predicate
$\text{E}x_{i}\text{ }P$. We first construct the nondeterministic
automaton $E(M,i)$
$$\big(Q,q_0,F,\delta',\mathbf{S_1},\ldots,\mathbf{S_{i-1}},\mathbf{S_{i+1}},\ldots,\mathbf{S_m}\big)$$
from $M$ by eliminating the $i$'th input (coordinate) on all
transitions, i.e., letting
$$\delta'\big(q,(\alpha_1,\ldots,\alpha_{i-1},\alpha_{i+1},\ldots,\alpha_m)\big)
=
\big\{\delta(q,(\alpha_1,\ldots,\alpha_{i-1},\alpha_i,\alpha_{i+1},\ldots,\alpha_m)):\text{
for all } \alpha_i \in \Sigma_{S_{i}} \big\}.$$

For example, letting $M$ be the automaton $(a,b):a=1\mathbin{\&}b=2$ depicted in Figure \ref{fig:minimized}, the automaton $E(M,2)$ is depicted as follows: 
\begin{figure}[H]
	\centering
	\includegraphics[scale=.5]{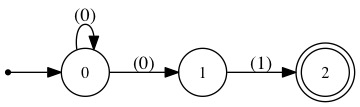}
	\caption{Non-deterministic automaton $E(M,2)$}
	\label{fig:quantified}
\end{figure}

By the definition of transition function of $E(M,i)$, i.e., $\delta'$,
it is easy to see that if $M$ accepts
\allowbreak$(w_1,\ldots,w_{i-1},w_i,w_{i+1},\ldots,w_m)$, then $E(M,i)$
accepts $$(w_1,\ldots,w_{i-1},w_{i+1},\ldots,w_m).$$ However, there
might be $t=(w_1,\ldots,w_{i-1},w_{i+1},\ldots,w_m)$ where the $|w_j|$
are equal for all $j\neq i$ and substitutions $x_j =
[w_j]_{\mathbf{S_j}}$ for all $j\neq i$, satisfies $P'$ but $E(M,i)$
does not accept $t$. In other words, there are cases where $E(M,i)$
does not accept $P'$.

In our example $M$ accepts $(0^n1,0^{n-1}10)$ for all $n\geq 1$, and as
it is clear $E(M,2)$ accepts $(0^n1)$ for all $n\geq 1$. However
$E(M,2)$ does not accept $(1)$, whereas $(1)$ should be accepted by any
automaton accepting $\text{E}b\text{ }(a=1\mathbin{\&}b=2)$.

Therefore, we have to do more work on $E(M,i)$, to get to an automaton
for $P'$. However as we will see in Lemma \ref{lemma:leading zeros
problem}, the automaton $E(M,i)$ might only miss an insignificant
portion of accepted tuples of an automaton accepting $P'$. These
insignificant tuples missed by $E(M,i)$ are those with leading or
trailing zeros. The good news is that with a little bit of technical
work, it is possible to revive even these insignificant tuples.

\begin{LMA}\label{lemma:leading zeros problem} 
Let $M$,$P$,$P'$, and $i$ be as in the discussion above, and suppose
$t=(w_1,\ldots,w_{i-1},w_{i+1},\ldots,w_m)$ is some tuple of same
length words. If $P'$ is satisfied with substitutions
$x_j=[w_j]_{\mathbf{S_j}}$ for $j\neq i$, then there exists a constant
$k\geq 0$ and\linebreak
$t_k=(w_{k,1},\ldots,w_{k,i-1},w_{k,i+1},\ldots,w_{k,m})$ such that for
all $j\neq i$ we have $w_{k,j}=0^kw_j$ or $w_{k,j}=w_j0^k$ depending on
whether $\mathbf{S_j}$ is $\MSD$ or $\LSD$, and $t_k$ is accepted by
$E(M,i)$. It is also the case that whenever $t_k$ for any $k\geq 0$,
with the appropriate substitutions, is satisfying $P'$, then $t$ is
also satisfying $P'$.
\end{LMA}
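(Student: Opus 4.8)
The plan is to exploit the key arithmetic fact built into the definition of a number system (Definition \ref{def:number systems}): padding a valid representation with leading zeros (in an $\MSD$ system) or trailing zeros (in an $\LSD$ system) does not change the decoded integer. So the satisfying assignment $x_j = [w_j]_{\mathbf{S_j}}$ for $j\neq i$ is \emph{unchanged} if we replace each $w_j$ by $0^k w_j$ (or $w_j 0^k$). Since $P'$ depends only on the decoded integer values, the padded tuple satisfies $P'$ whenever the original does, and conversely. This immediately handles the last sentence of the statement. The content of the lemma is therefore the existence of a single padding length $k$ that makes the padded tuple \emph{accepted} by $E(M,i)$.

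First I would unwind what it means for $P'\coloneqq \text{E}x_i\, P$ to be satisfied at $x_j=[w_j]_{\mathbf{S_j}}$: there is a witness value $v$ for $x_i$ such that $P$ holds at $x_i=v$ together with the given $x_j$. By the last bullet of Definition \ref{def:number systems}, $v$ has some representation $u \in R_{\mathbf{S_i}}$, and padding $u$ with zeros preserves $[u]_{\mathbf{S_i}}=v$. Now the automaton $M=(x_1,\ldots,x_m):P$ accepts \emph{same-length} tuples, so to feed $M$ the tuple witnessing $P$ I must make all coordinates the same length. Let $k$ be chosen large enough that the common padded length $L = \max_j |w_j|$ extended by $k$, or simply $L'$ chosen to exceed $|u|$ and every $|w_j|$, works simultaneously; concretely I would set $L'=\max\{|u|,|w_j| : j\neq i\}$ and pad \emph{every} coordinate (including the witness $u$) with the appropriate number of zeros on the $\MSD$/$\LSD$ side so that all resulting words have length $L'$. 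Call the padded witness $w_{L',i}$ and the padded others $w_{k,j}$ with $k=L'-|w_j|$.

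The point is that this common-length padded tuple $(w_{k,1},\ldots,w_{k,i-1},w_{L',i},w_{k,i+1},\ldots,w_{k,m})$ still encodes the same integers, hence still satisfies $P$, hence (since $M$ accepts exactly the same-length representations of assignments satisfying $P$) is accepted by $M$. By the forward implication already observed in the discussion preceding the lemma---namely that if $M$ accepts the full tuple then $E(M,i)$ accepts the tuple with the $i$'th coordinate deleted---we conclude $E(M,i)$ accepts $t_k=(w_{k,1},\ldots,w_{k,i-1},w_{k,i+1},\ldots,w_{k,m})$. Each surviving coordinate $w_{k,j}=0^k w_j$ or $w_j 0^k$ as required, so the main assertion holds with this $k$.

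The main obstacle is the bookkeeping around a \emph{single} $k$ versus per-coordinate padding. The lemma as stated uses one constant $k$ and writes $w_{k,j}=0^k w_j$, but the original $w_j$ need not all have the same length unless ``same length words'' in the hypothesis is read as guaranteeing $|w_j|$ equal for all $j\neq i$---which it does. Given that, I must still ensure the witness $u$ can be padded to the \emph{same} length $|w_j|+k$ as the other coordinates; this is where I pick $k$ large enough that $|w_j|+k \geq |u|$, then pad $u$ up to that length too. So the delicate step is choosing $k = \max\{0,\,|u|-|w_j|\}$ (the $|w_j|$ being a common value), verifying this single $k$ simultaneously pads every non-$i$ coordinate correctly and leaves room to pad the witness, and checking the $\MSD$ versus $\LSD$ cases place the zeros on the correct side in each instance. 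The decoding-invariance properties from Definition \ref{def:number systems} are exactly what make all these padded words valid representations of the same integers, which is the crux that keeps $P$ satisfied throughout.
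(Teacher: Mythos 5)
Your proof is correct and takes essentially the same route as the paper's: unwind the existential quantifier to get a witness value $v$ for $x_i$, pick a representation of $v$, use the zero-padding invariance from the definition of number systems to bring all coordinates to a common length so that the full tuple is accepted by $M$, and then project away the $i$'th coordinate to conclude that $E(M,i)$ accepts $t_k$ (with the second part of the lemma following from the same padding invariance). If anything, your bookkeeping is slightly more careful than the paper's, which writes the witness's padding with the same exponent $k$ as the other coordinates even though the witness generally requires a different amount of padding to reach the common length.
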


\begin{proof}
Substitutions $x_j=[w_j]_{\mathbf{S_j}}$ for $j\neq i$ satisfying
predicate $P'\coloneqq\text{E}x_i\text{ }P$ means that there exists a
natural number $v$, such that the substitutions above together with
$x_i=v$ is satisfying the predicate $P$.  By definition of number
systems, there exists a word $w_i$ such that $v=[w_i]_{\mathbf{S_i}}$.
Also by definition of number systems for any integer $y$ and word $w$,
if we have $y=[w]_{\mathbf{S}}$, then either $y=[0^kw]_{\mathbf{S}}$
for all $k\geq 0$ or $y=[w0^k]_{\mathbf{S}}$ for all $k\geq 0$
depending on whether $\mathbf{S}$ is $\MSD$ or $\LSD$.  Therefore there
exists an integer $k$ such that $w_{k,i}$ is either $0^kw_i$ or
$w_i0^k$ depending on whether $\mathbf{S_j}$ is $\MSD$ or $\LSD$ and
$v=[w_{k,i}]_{\mathbf{S_i}}$ and $|w_{k,i}|=|w_j|+k$ for all $j\neq i$.
Therefore $(w_{k,1},\ldots,w_{k,i-1},w_{k,i},w_{k,i+1},\ldots,w_{k,m})$
is accepted by $M$ where for all $j$ we have $w_{k,j}=0^kw_j$ or
$w_{k,j}=w_j0^k$ depending on whether $\mathbf{S_j}$ is $\MSD$ or
$\LSD$. Now by definition of $E(M,i)$, we know that
$t_k=(w_{k,1},\ldots,w_{k,i-1},w_{k,i+1},\ldots,w_{k,m})$ is accepted
by $E(M,i)$. This completes the first part of the lemma.

The second part follows very easily from the same properties of number
systems mentioned in the proof of the first part of the lemma.
\end{proof}

Based on Lemma \ref{lemma:leading zeros problem}, to get
$(x_1,\ldots,x_{i-1},x_{i+1},\ldots,x_m): \text{E}x_{i}\text{ }P$ we
just have to construct an automaton from $E(M,i)$ such that whenever
$t_k$ for any $k\geq 0$ is accepted, $t$ is also accepted. For the case
where all $\mathbf{S_j}$ for $j\neq i$ are either all $\MSD$ or all
$\LSD$, we can come up with an easy algorithm to revive tuples $t$ from
$t_k$. In case of all $\MSD$, let $I$ be the set of all states in
$E(M,i)$ reachable from the initial state by reading $(0,\ldots,0)^*$,
or in case of all $\LSD$, let $F'$ be the set of all states reaching to
a final state by reading $(0,\ldots,0)^*$.  We can compute $I$ or $F'$
using breadth-first search. In the case of $\MSD$ the nondeterministic
automaton\footnote{This is an automaton with multiple initial states.
One can show that for every nondeterministic automaton with multiple
initial states, there is an equivalent automaton with only one initial
state.}
 $$(Q,I,F,\delta',\mathbf{S_1},\ldots,\mathbf{S_{i-1}},\mathbf{S_{i+1}},\ldots,\mathbf{S_n})$$
and in the case of $\LSD$ the nondeterministic automaton
$$(Q,q_0,F',\delta',\mathbf{S_1},\ldots,\mathbf{S_{i-1}},\mathbf{S_{i+1}},\ldots,\mathbf{S_n})$$
is equivalent to $(x_1,\ldots,x_{i-1},x_{i+1},\ldots,x_m):
\text{E}x_{i}\text{ }P$. Determinizing and minimizing this automaton
gives us $(x_1,\ldots,x_{i-1},x_{i+1},\ldots,x_m): \text{E}x_{i}\text{
}P$.

In Figure \ref{fig:quantified}, the variable $a$ is over $\MSD[2]$. So
the set $I$ is $\{0,1\}$, therefore the following nondeterministic
automaton accepts $\text{E}b\text{ } a=1\mathbin{\&}b=2$:

\begin{figure}[H]
\centering
\includegraphics[scale=.5]{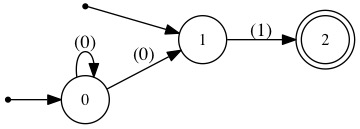}
\caption{Non-deterministic automaton accepting $\text{E}b\text{ } a=1\mathbin{\&}b=2$}
\label{fig:quantified2}
\end{figure}

Now determinizing and minimizing this automaton gives us $(a):\text{E}b\text{ } a=1\mathbin{\&}b=2$:
\begin{figure}[H]
	\centering
	\includegraphics[scale=.5]{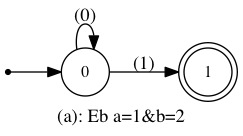}
	\caption{$(a):\text{E}b\text{ } a=1\mathbin{\&}b=2$}
	\label{fig:quantified3}
\end{figure}

\begin{mdframed}[style=MyFrame]
Currently if for $(x_1,\ldots,x_n): P$ it is not the case that for all
$j\neq i$ number systems $\mathbf{S_j}$ are all $\MSD$ or all $\LSD$,
then Walnut only constructs $E(M,i)$ for $\text{E}x_{i}\text{ }P$,
which is not theoretically accurate. So the user has to be very
cautious when quantifying predicates over mixed $\MSD$ and $\LSD$
number systems, or in cases where the quantified automaton is
non-arithmetic. For a definition of the latter see Section
\ref{sec:non-arithmetic automata}.
\end{mdframed}

To obtain an automaton for $Ax_i P$, note its equivalence to ${\sim
(\text{E}x_i\text{ }{\sim (P)})}$, where $\sim$ is the logical
complement (negation). See the next section to learn about the
complement operator.

\subsection{Complement and Reverse}\label{sec:complement and reverse}
To obtain $(x_1,\ldots,x_m): \sim (P)$ from $(x_1,\ldots,x_m): P$, one
has to add all transitions to dead state (in Walnut, we call this
totalizing an automaton), and then switching final and non-final
states, but one also has to make sure that the resulting automaton is
intersected with the automaton accepting
$R_{\mathbf{S_1}}\times\cdots\times R_{\mathbf{S_m}}$ where
$\mathbf{S_i}$ is the number system assigned to $x_i$ in (annotated)
predicate $P$. (Recall that $R_{\mathbf{S}}$ is the set of all valid
representations in the number system $\mathbf{S}$. Also recall that to
define and use a number system in Walnut, one has to provide automaton
accepting the set of all representations in that number system,
therefore automaton accepting $R_{\mathbf{S_1}}\times\cdots\times
R_{\mathbf{S_m}}$ could be constructed easily using cross product
explained in Section \ref{sec:cross product}.)

Take a look at automaton $(a): \text{?msd\_fib }a=1$ depicted in Figure \ref{fig:orig} that accept words representing $1$ in $\MSD[fib]$.
\begin{figure}[h]
	\centering
	\includegraphics[scale=.5]{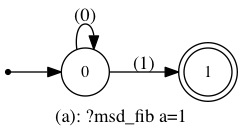}
	\caption{Number 1 in Fibonacci}
	\label{fig:orig}
\end{figure}

Now to obtain $(a):\sim(\text{?msd\_fib }a=1)$, we first add the dead state and all the transitions to it:
\begin{figure}[H]
	\centering
	\includegraphics[scale=.5]{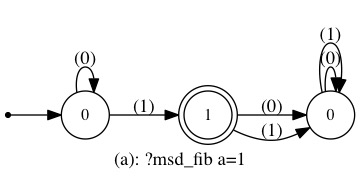}
	\caption{Totalized automaton}
	\label{fig:orig_dead_state}
\end{figure}
Switching final and non-final states we obtain an automaton accepting $\{0,1\}^*\setminus 0^*1$:
\begin{figure}[H]
	\centering
	\includegraphics[scale=.5]{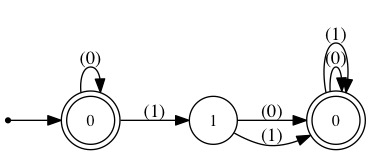}
	\caption{Switching final and non-final states}
	\label{fig:orig_switch_final}
\end{figure}
But then this automaton accepts words that have consecutive $1$'s which
are not acceptable Fibonacci representations. So to get the final
answer we have to intersect this automaton with the one depicted in
Figure \ref{fig:fib_representations}. The result is depicted in
\ref{fig:orig_complement}.
\begin{figure}[h]
	\centering
	\includegraphics[scale=.5]{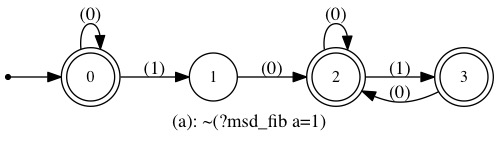}
	\caption{Automata accepting all numbers in Fibonacci except $1$}
	\label{fig:orig_complement}
\end{figure}

The reverse operator is not a logical operator per se, but we include
it because it is useful when working with automata. The operand of the
reverse operator is an automaton \footnote{Unlike the reverse operator,
operands for logical operators are predicates. The reader however
understands the very thin and superficial distinction between automata
and predicates in this article.}. The result is an automaton with all
its transitions reversed.

\subsection{Arithmetic and Comparison Operators}\label{sec:arithmetic and comparison}
Recall that for every number system $\mathbf{S}$ that we use in Walnut
the three automata $(a,b,c):\text{?S }a=b+c$, $(a,b): \text{?S
}a=b$\footnote{This automaton does not need to be defined explicitly by
the user, because we assumed for all number systems $\mathbf{S}$ in
Walnut $a=_{\mathbf{S}}b$ if and only if $a=b$.}, and $(a,b):\text{?S
}a<b$ are defined. In this section we show that using these three
automata and the decision procedure outlined in Sections \ref{sec:cross
product}--\ref{sec:complement and reverse}, we can construct automata
for more complex relative expressions with lots of arithmetic
operators.

For a constant $c>0$, a natural number, automata $(a):\text{?S }a=c$
can be constructed recursively using automata $(b):\text{?S }b=c'$ and
$(a,b):\text{?S }a=b+1$ where $c'$ is the predecessor of $c$, i.e.,
$c'+1=c$. For example, predicate $\text{?S }a=2$ is equivalent to
$\text{?S }\text{E}b\text{ }a=b+1 \mathbin{\&} b=1.$ Similarly
$\text{?S }b=1$ is equivalent to $\text{?S }\text{E}b_2\text{ } b=b_2+1
\mathbin{\&} b_2=0$. Based on Definition \ref{def:number systems}, for
all number systems $\mathbf{S}$, the automaton for $\text{?S }b_2=0$ is
the simple automaton accepting $0^*$. To construct automaton $(a,b):
\text{?S }a=b+1$, just note that the predicate is equivalent to
$\text{?S }\text{E}c\text{ } a=b+c \mathbin{\&} c=1$.

A similar recursive argument can be applied to obtain $(a,b): a=c*b$
for a constant $c>0$, i.e.,  one can construct $(a,b): (a=b_2+b)
\mathbin{\&} (b_2=c'*b)$ where $c'$ is the predecessor of $c$. The
similar argument can be applied to obtain automata for division by
constants or subtraction.

To construct $(a,b): \text{?S }a<=b$, note its equivalence to $(a,b):
\text{?S }a<b \mid a=b$. With similar arguments, one can construct
automaton for other comparison operators.

It is important to understand Walnut's construction of
$$(y,x_1,x_2,\ldots,x_n):\text{?S } y \olessthan (x_1 \otimes_1 x_2
\otimes_2 \cdots \otimes_{n-1} x_n)$$ where $n\geq 3$. Here
$\olessthan$ denotes an arbitrary comparison operator, and the
$\otimes_i$ are arbitrary arithmetic operators. Also let $y$ and
$x_i$ be variables or arithmetic constants. All arithmetic operators
in Walnut are associative from left to right; see Section
\ref{sec:operators}. Based on this, Walnut first transforms the
predicate to an equivalent predicate $$(y,x_1,x_2,\ldots,x_n):\text{?S
E}y_1,\ldots,y_{n-2}\text{ } (y_1=x_1 \otimes_1
x_2)\mathbin{\&}(y_2=y_1\otimes_2 x_3) \mathbin{\&}\cdots\mathbin{\&}
(y=y_{n-2}\otimes_{n-1} x_n).$$ Now Walnut has all the resources
necessary to construct this last automaton.

For example, to construct $(a): 0\text{<=}(a-1+1)$, Walnut first transforms it to $(a): \text{E}b\text{ }(b=a-1)\mathbin{\&}(0=b+1)$. The automaton is depicted below:
\begin{figure}[H]
	\centering
	\includegraphics[scale=.5]{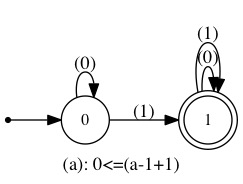}
	\caption{Automaton for $0\text{<=}a-1+1$ does not accept $0^*$}
	\label{fig:negative}
\end{figure}
There is something here that is worth noting. Note how this automaton
does not accept $0$? In arithmetic over integers $a=0$ satisfies the
predicate. However in Presburger arithmetic setting $a=0$ gives $b=-1$,
which is not acceptable, since Presburger arithmetic is defined over
natural numbers. In order to fix this issue, try to always postpone
subtraction and division to the rightmost position in your predicates.
For example, writing $(a): 0<=(a+1-1)$ results in
\begin{figure}[H]
	\centering
	\includegraphics[scale=.5]{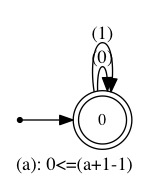}
	\caption{Automata for $0\text{<=}a+1-1$ accepts $0^*$}
	\label{fig:negative_fixed}
\end{figure}

\subsection{Calling an Automaton}\label{sec:calling an automaton}
In Section \ref{sec:calling expressions}, we learned about the syntax
and semantic of calling an automaton. A calling expression is a kind of
syntactic
sugar to save some space when writing long and complicated predicates.
Suppose we already have computed the automaton $(x_1,x_2,\ldots,x_n): P$ and
given it the name $M$. We can refer to $P$ in a predicate $P'$
without writing $P$ all over again, by just writing
$\$M(e_1,e_2,\ldots,e_n)$, where $\$$ symbol is to signify that $M$ is
an automaton, and the $e_i$ are either arithmetic expressions or
predicates with exactly one free variable. In such case, we say,
predicate $P'$ is calling $M$ (or is calling predicate $P$).

To construct automaton for $\$M(e_1,e_2,\ldots,e_n)$, Walnut constructs
the equivalent automaton:
$$\text{E}x_1,x_2,\ldots,x_n\text{ }P \mathbin{\&} (x_1=e'_1)\mathbin{\&} (x_2=e'_2)\mathbin{\&} \cdots\mathbin{\&}(x_n=e'_n)\mathbin{\&} (e_{j_1})\mathbin{\&} (e_{j_2})\mathbin{\&} \ldots\mathbin{\&} (e_{j_k})$$
where $x_1,x_2,\ldots,x_n$ are the free variables in $P$, $k$ is the
number of predicates in $e_1,e_2,\ldots,e_n$, $j_1,j_2,\ldots,j_k$ are
indices of predicates among $e_1,e_2,\ldots,e_n$, and if $e_j$ is an
arithmetic expression, then $e'_j=e_j$, otherwise $e_j$ is a predicate,
and $e'_j$ is the free variable occurring in $e_j$.

The fact that this predicate is equivalent to $\$M(e_1,e_2,\ldots,e_n)$
could be obtained easily using the semantic rule explained in Section
\ref{sec:calling expressions}. Walnut's implementation includes some
considerations to improve efficiency. For example, obviously when $e_j$
is a variable, we do not need to introduce a new variable $x_j$.

\begin{mdframed}[style=MyFrame]
Calling an automaton inside a predicate $P'$ is also more efficient than copying $P$ over and over again in $P'$. This is because Walnut does not need to construct $M$ every time we write $\$M$ in $P'$.
\end{mdframed}

The commands def and eval in Walnut are responsible for constructing the
automaton $M$ from predicate $P$. Unlike eval, the command def saves
the automaton $M$ so it can be called later from other predicates like
$P'$. See Section \ref{sec:def} for more information on def command.

To see an example, let $M$ be the automaton $(a,b):a+b=10$, and let $Q$
be the predicate $\$M(x,y)\mathbin{\&}y=8$. The automaton in Figure
\ref{fig:call_example} accepts $Q$.
\begin{figure}[H]
	\centering
	\includegraphics[scale=.5]{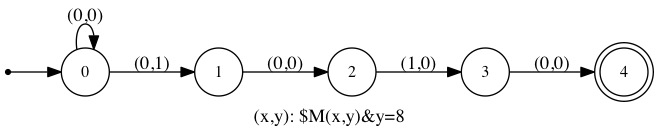}
	\caption{Automaton accepting $Q$}
	\label{fig:call_example}
\end{figure}

Please refer to Section \ref{sec:def}, which is devoted to examples of calling automata.  

\begin{mdframed}[style=MyFrame]
When calling an automaton $M$, one has to make sure that the $j$'th argument is in the same number system as $j$'th input in $M$ for all $j$.
\end{mdframed}

\subsection{Indexing an Automatic Word}\label{sec:indexing to an automatic word}

Suppose $W$ is an $n$-dimensional automatic word and
$M\big(Q,q_0,O,\delta,\Sigma,\mathbf{S_1},\mathbf{S_2},\ldots,\mathbf{S_n}\big)$
is its corresponding automaton with output. Also let $\alpha$ be an
alphabetic constant. We note that $(x_1,x_2,\ldots,x_n):
W[x_1][x_2]\cdots[x_n]=@\alpha$ is the automaton
$$\big(Q,q_0,F,\delta,\mathbf{S_1},\mathbf{S_2},\ldots,\mathbf{S_n}\big)$$
when minimized, where $F=\big\{q:O(q)=\alpha\big\}$. Similar arguments
can be made for other comparison operators.

Suppose $W_1$ and $W_2$ are $m$- and $n$-dimensional automatic words, respectively, and let $M_1$ and $M_2$ be their corresponding automata with output. We note that $(x_1,x_2,\ldots,x_m,y_1,y_2,\ldots,y_n): W_1[x_1][x_2]\cdots[x_m]=W_2[y_1][y_2]\cdots[y_n]$
is  $(M_1\times M_2)(F)$ when minimized, where $F$ contains all $(q_1,q_2)$ where $q_1$ and $q_2$ are states of $M_1$ and $M_2$, respectively, and they have the same output. Similar arguments can be made for other comparison operators.

The above statements can be proved easily using the semantic rule
explained in Section \ref{sec:indexing expressions}. Now what if
indices are arithmetic expressions and/or predicates with one free
variable? The construction is based on substitutions similar to the
ones mentioned for calling expressions in Section \ref{sec:calling an
automaton}.

\section{Special Automata in Walnut}\label{sec:special automata}

\subsection{True and False Automata}\label{sec:true and false}
In Section \ref{sec:automata accepting predicates} we saw an example of a predicate with no free variables: 
$$\text{A}x\text{ }\text{E}y\text{ } x=2*y \mid x=2*y+1$$
This predicate evaluates to true (it is a tautology). Here is an example of a predicate with no free variable that evaluates to false (contradiction):
$$\text{E}x\text{ } x>x+1$$
Walnut assigns a special automaton called true (false) automaton to predicates with no free variable that evaluate to true (false). However there could be predicates with free variables that are converted to true or false automata. See the following conventions implemented in Walnut:
\begin{itemize}
\item Conjunction (disjunction) of true automaton with automaton $M$ yields $M$ (true automaton, respectively).
\item Conjunction (disjunction) of false automaton with automaton $M$ yields false automaton ($M$, respectively).
\item Negation of true automaton is false automaton and vice versa.
\item Conventions for other logical operators follow from the above.
\end{itemize}
These conventions are reflecting the following facts from mathematical logic (for a predicate $P$):
\begin{itemize}
\item $P\mathbin{\&}\text{true}$ and $P\mid \text{true}$ are equivalent to $P$ and true respectively.
\item $P\mathbin{\&}\text{false}$ and $P\mid \text{false}$ are equivalent to false and $P$ respectively.
\item $\sim\text{true}=\text{false}$ and $\sim\text{false}=\text{true}$.
\end{itemize}
As an example, the automaton $(y): (\text{A}x\text{ } x<x+1)
\mathbin{\&} y=2$ is exactly the same as automaton $(y): y=2$. As
another example, the automaton $(y):(\text{E}x\text{ } x<0)
\mathbin{\&} y=2$ is the false automaton. As in our last example, note
that $\sim(\text{E}x\text{ } x<0)$ is the true automaton.

Figures~\ref{fig:true} and \ref{fig:false} show the special way Walnut
represents true and false automata.
\begin{figure}[h]
\centering
\begin{minipage}[t]{.5\textwidth}
\centering
\includegraphics[scale=.5]{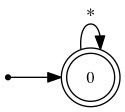}
\caption{True automaton}
\label{fig:true}
\end{minipage}%
\begin{minipage}[t]{.5\textwidth}
\centering
\includegraphics[scale=.5]{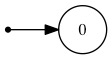}
\caption{False automaton}
\label{fig:false}
\end{minipage}
\end{figure}

\subsection{Non-arithmetic Automata}\label{sec:non-arithmetic automata}
There is a need for automata in which inputs (or some of them) do not
represent numbers in a specific number system. These automata might
accept patterns, or they might be relying on some non-arithmetic
instructions.

For example, the pattern $0^*10^*$ represents powers of $2$ in
$\MSD[2]$. However, the same pattern represents powers of $2$ in
$\LSD[2]$. Therefore, by not assigning a number system to the automaton
$M$ accepting the pattern $0^*10^*$, we are allowed to call $M$ both in
predicates in $\MSD[2]$ and in predicates in $\LSD[2]$. Assigning
number systems to automata accepting patterns usually does not make
much sense. See more examples in Section \ref{sec:reg}.

Allowing non-arithmetic automata is specially helpful when working with
the class of paperfolding words. These words are defined with an
automaton that takes two inputs. One input is a number that represents
a position in the paperfolding word and the other input is folding
instruction that does not represent numbers. To see how Walnut can be
used to prove properties of paperfolding words see
\cite{paperfolding}.

\section{Installation}\label{sec:installation}
Walnut is a command line program. You can run Walnut on any platform as long as you have Java 8 or later (preferably JDK 8 or higher) installed. To see which version of Java is installed on your machine type the following in the terminal (without the command line prompt \$):
\begin{lstlisting}[language=bash]
$java -version
\end{lstlisting}
If you download Walnut as the zipped file Walnut.zip first thing you need to do is to uncompress it. Then open the terminal (or command prompt in Windows), and change the directory to ``.../Walnut/bin/'', and run Walnut by typing:
\begin{lstlisting}[language=bash]
$java Main.prover
\end{lstlisting}
To exit Walnut, type the following command (with the semicolon):
\begin{lstlisting}[style=cmd]
exit;	
\end{lstlisting}
To make the distinction that we are typing a Walnut command, names of
all Walnut's commands are written in {\color{cmd} green}.  Walnut
produces graphical representations of automata among other things.
Those representations are files with .gv extensions. In order to open
these files you need to install Graphviz, a graph visualization package
which is available for all platforms.
\emph{All text files that Walnut produces are in the UTF-8 encoding.
All text files that Walnut reads have to be in the  UTF-8 encoding as well. Most editors, e.g., vim, notepad, etc., use UTF-8 encoding by default.}

\subsection{Eclipse}\label{sec:eclipse}
As explained in the previous section, you can use the terminal to work
with Walnut and enter your commands. However, I encourage you to use a
Java IDE, like Eclipse, because in my opinion, entering commands in the
console of a good IDE is more convenient than doing the same in the
terminal. You are only going to run Walnut inside the IDE and use the
IDE's console (not the source code editor) to enter Walnut commands.
Here is the instructions on how to run Walnut inside Eclipse for Java:
\begin{enumerate}
\item Go to this \href{https://www.eclipse.org/downloads/packages/eclipse-ide-java-developers/mars1}{link} and download Eclipse for Java for your specific platform.
\item Downloaded file is probably compressed. To start Eclipse, first uncompress the file, then click on the file named Eclipse. 
\item When you run Eclipse, it asks for a workspace address. Feel free to enter the path to your desired directory.

Now we need to import Walnut into Eclipse as a Java project:
\item When in Eclipse, go to ``File > Import ... ''. In the dialog that opens up choose ``General > Existing Project into Workspace''.
\item In the dialog that opens up, click browse, find Walnut (that you downloaded and uncompressed in the Installation section) and press open. Then click finish.
\item Close the Welcome page in your Eclipse window.
\item On the "Project Explorer" (probably) on the left of your screen, you can see only one project (the one that you just imported). Click on it. Then click on ``src''. Then double click on ``prover.java''. 
\item You will see a green circle with a white triangle inside it. Click on it. This causes Walnut to run. 
\item You can enter your Walnut commands in the console window in your Eclipse. If you are not able to find the console window, go to ``Window > Show View > Other > General > Console'' to open it. 
\end{enumerate}

\section{Commands}\label{sec:commands}
Every command ends in either a colon, double colon, or a semicolon. If you want to see the reports on the intermediate steps of a computation use colon or double colon, otherwise use semicolon. For example, if we type:
\begin{lstlisting}[style=cmd]
eval test "(*\textcolor{blue}{$a=b+1$}*)":
\end{lstlisting}
we get an output similar to the following written in the console:
\begin{lstlisting}[style=out]
(*\textcolor{out}{$a=b+1$}*):2 states - 1ms
total computation time: 2ms
\end{lstlisting}
which explains that the automaton for predicate $a=b+1$ has 2 states
and it took 2 milliseconds to compute it. We use {\color{blue}blue} to
denote predicates. Here we use {\color{out} grey} to indicate the
output produced by Walnut in the console. We use {\color{red} red} to
indicate errors in the console.

Walnut prints to console only the major steps of a computation when a single colon is used to end a command. Use two colons if you want to see all the steps behind the scene (cross product \ref{sec:cross product}, quantification \ref{sec:quantification}, minimizations, converting NFAs to DFAs (determinizations), etc.):

\begin{lstlisting}[style=cmd]
eval test "(*\textcolor{blue}{$a=b+1$}*)"::
\end{lstlisting}
then we get an output similar to the following written in the console:
\begin{lstlisting}[style=out]
computing (*\textcolor{out}{$b+1$}*)
computed (*\textcolor{out}{$a+1$}*)
computing (*\textcolor{out}{$a=(b+1)$}*)
	computing (*\textcolor{out}{$\&$}*):1 states - 2 states
	computing cross product:1 states - 2 states
	computed cross product:2 states - 1ms
		minimizing:2 states
			determinizing:2 states
			determinized:2 states - 0ms
		minimized:2 states - 0ms
	computed (*\textcolor{out}{$\&$}*):2 states - 1ms
	quantifying:2 states
		minimizing:2 states
			determinizing:2 states
			determinized:2 states - 0ms
		minimized:2 states - 0ms
	quantified:2 states - 0ms
	fixing leading zeros:2 states
		determinizing:2 states
		determinized:2 states - 1ms
		minimizing:2 states
			determinizing:2 states
			determinized:2 states - 0ms
		minimized:2 states - 0ms
	fixed leading zeros:2 states - 1ms
computed (*\textcolor{out}{$a=(b+1)$}*)
(*\textcolor{out}{$a=(b+1)$}*):2 states - 2ms
total computation time: 34ms
\end{lstlisting}

Whitespace is ignored. You can, for example, span one single command into multiple lines to improve readability. So, for example, you can write the following interchangeably:
\begin{lstlisting}[style=cmd]
eval test "(*\textcolor{blue}{$a=b+1$}*)";
eval test
"(*\textcolor{blue}{$a=b+1$}*)";
eval test "(*\textcolor{blue}{$a$}*)
(*\textcolor{blue}{$=b+1$}*)";
eval test
"(*\textcolor{blue}{$a=b+1$}*)"
;
\end{lstlisting}

In case we forget to separate the name test and predicate $a=b+1$ of the eval command, Walnut catches it by returning an error:
\begin{lstlisting}[style=cmd]
eval test"(*\textcolor{blue}{$a=b+1$}*)":
\end{lstlisting}
\vspace*{-\baselineskip}
\begin{lstlisting}[style=err]
invalid use of eval/def command
	: eval test"(*$a=b+1$*)":
\end{lstlisting}

Here is the full list of commands in Walnut and we will go over them one by one in detail:
\begin{itemize}
\item exit
\item eval <name> <predicate> 
\item def <name> <predicate>
\item macro <name> <template>
\item reg <name> <number system> <regular expression>
\item reg <name> <alphabet> <regular expression>
\item load <file name>
\end{itemize}

\subsection{eval: eval <name> <predicate>}\label{sec:eval}
This is the most important command in Walnut and it stands for ``evaluate.'' This command takes two arguments. The first argument is a name for the evaluation. Name of the evaluation starts with a letter and could contain alphanumerics and underscore. The files generated as the result of the eval command, all share the name given in the first argument. The second argument is a predicate that we want to evaluate. Predicates are always placed between quotation marks. To see the definition for predicates see Section \ref{sec:predicates}. In this article we typeset predicates in math mode in \LaTeX. However, the reader should note that this typesetting is different from the one they see in the terminal. Let us see an example:
\begin{lstlisting}[style=cmd]
eval four "(*\textcolor{blue}{$a=4$}*)":
(@(*\textcolor{out}{$a=4$}*): 4 states - 3ms
total computation time: 3ms@)
\end{lstlisting}
This evaluates to an automaton with one binary input labeled $a$. This
is the automaton $(a):a=4$. To learn about the notation $(a):a=4$ see
Section \ref{sec:automata accepting predicates}. The automaton accepts
only if $a$ is the most-significant-digit-first binary representation
of $4$, i.e., if it belongs to $0^*100$. This automaton is drawn and
saved in the directory ``/Walnut/Result/'' in a file named four.gv as
shown in Figure \ref{fig:four}. The graph drawing software Graphviz is
required to open this file; see Section \ref{sec:installation}.

\begin{figure}[H]
	\centering
	\includegraphics[scale=.5]{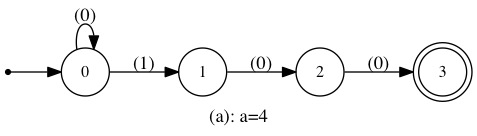}
	\caption{Content of the file four.gv}
	\label{fig:four}
\end{figure}

How does Walnut know to use the most-significant-digit-first binary
system? Walnut defaults to $\MSD[2]$ which is how we show the
most-significant-digit-first binary system in Walnut; see Section
\ref{sec:number systems} to learn about this notation and to learn
about number systems in general. To explicitly mention $\MSD[2]$ type:
\begin{lstlisting}[style=cmd]
eval four "(*\textcolor{blue}{$\text{?msd\_2 } a=4$}*)";
\end{lstlisting}
Similarly,  for the least-significant-digit-first binary type:
\begin{lstlisting}[style=cmd]
eval lsd_four "(*\textcolor{blue}{$\text{?lsd\_2 } a=4$}*)";
\end{lstlisting}
\begin{figure}[H]
	\centering
	\includegraphics[scale=.5]{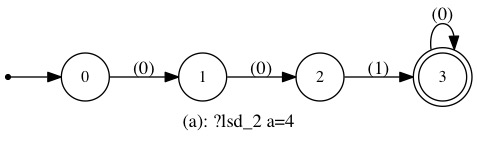}
	\caption{lsd\_four.gv}
	\label{fig:four in lsd 2}
\end{figure}

Here is another example, this time in $\LSD[3]$:
\begin{lstlisting}[style=cmd]
eval ternary_example "(*\textcolor{blue}{$\text{?lsd\_3 } a<5$}*)";
\end{lstlisting}
\begin{figure}[H]
	\centering
	\includegraphics[scale=.5]{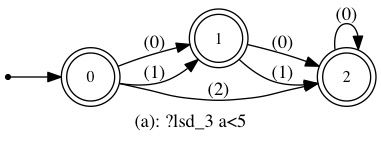}
	\caption{ternary\_example.gv}
	\label{fig:ternary example}
\end{figure}

This automaton accepts exactly those words representing the numbers
$0$,$1$,$2$,$3$,and $4$ in the least-significant-digit ternary base,
i.e., $0^*$,$10^*$,$20^*$,$010^*$,$110^*$ respectively. Note the
trailing zeros in the representations as opposed to the leading zeros
in a most-significant-digit-first ($\MSD$) number system. Also note
that this automaton accepts the empty word. This is because in the
definition of number systems we agreed that the empty word represents
$0$.

Let us see an example of an automaton with $2$ inputs:
\begin{lstlisting}[style=cmd]
eval two_inputs "(*\textcolor{blue}{$b=a+1$}*)";
\end{lstlisting}
This constructs the automaton $(a,b):b=a+1$ in which the first input corresponds to $a$, and the second input corresponds to $b$. Recall from Section \ref{sec:automata accepting predicates} that Walnut uses lexicographic ordering on the name of variables when constructing automata. So, for example, even though the first variable that appears in $b=a+1$ is $b$, it corresponds to the second input in the automaton.
\begin{figure}[H]
	\centering
	\includegraphics[scale=.5]{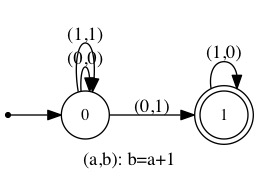}
	\caption{two\_inputs.gv}
	\label{fig:two inputs}
\end{figure}

Walnut generates two other files as the outcomes of the eval command
which can also be found in the directory \linebreak
``/Walnut/Result/''. For the evaluation two\_inputs, these two files
are named two\_inputs\_log.txt and two\_inputs.txt and they are both
text files.

The file two\_inputs\_log.txt contains the details of the evaluation
including the intermediate steps and the time each of those steps took
to complete. In our example, there are not many intermediate steps
involved:

\begin{figure}[H]
\begin{CenteredBox}
\begin{lstlisting}[style=cmd]
(*$b=2$*): 3 states - 0ms
total computation time: 0ms
\end{lstlisting}
\end{CenteredBox}
\captionof{lstlisting}{two\_inputs\_log.txt}
\end{figure}

In addition to two\_inputs\_log.txt , another file with the name two\_inputs\_detailed\_log.txt will be created if the command ends in two colons instead of a colon or a semicolon.

The file two\_inputs.txt contains the definition of the automaton in Figure \ref{fig:two inputs}:
\begin{figure}[H]
\begin{CenteredBox}
\begin{lstlisting}[style=file]
msd_2 msd_2
0 0
0 0 -> 0
0 1 -> 1
1 1 -> 0
1 1
1 0 -> 1
\end{lstlisting}
\end{CenteredBox}
\captionof{lstlisting}{two\_inputs.txt}
\label{file:two inputs}
\end{figure}

Line $1$ indicates that the first and the second inputs of the
automaton are both in $\MSD[2]$. The two states $0$ and $1$ in Figure
\ref{fig:two inputs} are declared in Lines $2$ and $6$. The first zero
in Line $2$ refers to the state $0$ and the second zero refers to its
output. Likewise, the first one in Line $6$ refers to the state $1$ and
the second one refers to its output. Note that the automaton for
evaluation two\_inputs is not an automaton with output, however all
automata are stored as automata with outputs in Walnut; see Section
\ref{sec:words and automata}. For an ordinary automaton, states with
non-zero outputs are interpreted as final states, and states with zero
outputs are interpreted as non-final states. So here state $0$ is
non-final, whereas state $1$ is final. Transitions for states $0$ and
$1$ are declared in Lines $3$-$5$ and $7$ respectively. For example,
state $0$ on $(0,0)$ transitions to itself, and on $(0,1)$ transitions
to state $1$. Transitions not depicted are transitions to the dead
state. For example, state $1$ transitions to the dead state on every
tuple except $(1,0)$. To learn more about definition of an automaton in
text files and how to manually define automata in text files see
Section \ref{sec:define automata in text files}.

In Section \ref{sec:automatic words}, we talked about the Thue-Morse
word. The Thue-Morse word's corresponding automaton with output,
depicted in Figure \ref{fig:thue}, is defined in directory
``/Walnut/Word Automata Library/'' in a file named T.txt. We can refer
to the Thue-Morse word in predicates by typing $T$. See Section
\ref{sec:define new automatic words} on how to define new automatic
words in Walnut.

We talked about square subwords in the Thue-Morse word. The following predicate is satisfied by $(i,n)$ if \linebreak$T[i..i+n-1]=T[i+n..i+2n-1]$, i.e., if there exists a square subword of length $2n$ starting at position $i$. 
\begin{lstlisting}[style=cmd]
eval squares_in_thue_morse_word "(*\textcolor{blue}{$n>0 \mathbin{\&}(\text{A}k\text{ } k< n\mathbin{\text{=>}}T[i+k]=T[i+n+k])$}*)";
\end{lstlisting}
The order of a square is half its length. Now if we want to find all natural numbers $n$ for which there exists a square of order $n$ in the Thue-Morse word, we simply use the existential quantifier E:
\begin{lstlisting}[style=cmd]
eval order_of_squares_in_thue_morse_word "(*\textcolor{blue}{$\text{E}i\text{ }n>0 \mathbin{\&}(\text{A}k\text{ } k< n\mathbin{\text{=>}}T[i+k]=T[i+n+k])$}*)";
\end{lstlisting}
\begin{figure}[H]
	\centering
	\includegraphics[scale=.5]{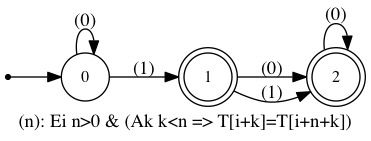}
	\caption{order\_of\_squares\_in\_thue\_morse\_word.gv}
	\label{fig:order of squares in thue morse word}
\end{figure}
Based on this automaton, the natural number $n$ with $\MSD[2]$ representation of the form $0^*(1|11)0^*$ is an order of a square in the Thue-Morse word. In other words, the set of orders in the Thue-Morse word is $$\big\{n:\text{there exists } k \geq 0\text{ such that }n=2^k \text{ or } n=2^{k+1}+2^{k}\big\}$$

Overlaps are the words of the form $axaxa$ where $a$ is a symbol and
$x$ is any word, e.g., the word ``alfalfa'' in English is an overlap. It is
a known that the Thue-Morse word avoids overlaps. How do we make sure,
using Walnut, that the Thue-Morse word does not have any overlaps? The
result of the following predicate must be the true automaton; see
Section \ref{sec:true and false}, if the Thue-Morse does not have any
overlaps:
\begin{lstlisting}[style=cmd]
eval thue_morse_does_not_have_overlaps "(*\textcolor{blue}{$\sim(\text{E}i,n\text{ }n>0 \mathbin{\&}(\text{A}k\text{ } k\mathbin{\text{<=}} n\mathbin{\text{=>}}T[i+k]=T[i+n+k]))$}*)":
(@(*\textcolor{out}{$n>0$}*): 2 states - 1ms
  (*\textcolor{out}{$k<=n$}*): 2 states - 1ms
    (*\textcolor{out}{$T[(i+k)]=T[((i+n)+k)]$}*): 12 states - 6ms
      (*\textcolor{out}{$(k\mathbin{\text{<=}}n\mathbin{\text{=>}}T[(i+k)]=T[((i+n)+k)])$}*): 25 states - 1ms
        (*\textcolor{out}{$(A k (k\mathbin{\text{<=}}n\mathbin{\text{=>}}T[(i+k)]=T[((i+n)+k)]))$}*): 1 states - 27ms
          (*\textcolor{out}{$(n>0\mathbin{\&}(A k (k\mathbin{\text{<=}}n\mathbin{\text{=>}}T[(i+k)]=T[((i+n)+k)])))$}*): 1 states - 0ms
            (*\textcolor{out}{$(E i , n (n>0\mathbin{\&}(A k (k\mathbin{\text{<=}}n\mathbin{\text{=>}}T[(i+k)]=T[((i+n)+k)]))))$}*): 1 states - 1ms
              (*\textcolor{out}{$\sim(E i , n (n>0\mathbin{\&}(A k (k\mathbin{\text{<=}}n\mathbin{\text{=>}}T[(i+k)]=T[((i+n)+k)]))))$}*): 1 states - 0ms
total computation time: 38ms@)
\end{lstlisting}
\begin{figure}[H]
	\centering
	\includegraphics[scale=.5]{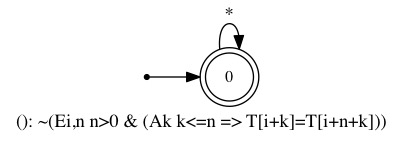}
	\caption{thue\_morse\_does\_not\_have\_overlaps.gv}
	\label{fig:thue_morse_does_not_have_overlaps}
\end{figure}
The automaton in Figure \ref{fig:thue_morse_does_not_have_overlaps} is the true automaton. For more information see Section \ref{sec:true and false}.

Note that if a predicate is not valid Walnut returns an error:
\begin{lstlisting}[style=cmd]
eval invalid "(*\textcolor{blue}{$x+y+z$}*)";
\end{lstlisting}
\vspace*{-\baselineskip}
\begin{lstlisting}[style=err]
the final result of the evaluation is not of type automaton
	: eval invalid "(*$x+y+z$*)";
\end{lstlisting}

To understand why this is not a valid predicate see Section \ref{sec:arithmetic expressions}. In the following examples note how Walnut points to the locations of the errors in the predicates. By saying "char at $n$", Walnut tries to convey that there is something wrong going on at the vicinity of the $n$'th character in the predicate.
\begin{lstlisting}[style=cmd]
eval invalid2 "(*\textcolor{blue}{$(x+y+z=0$}*)";
\end{lstlisting}
\vspace*{-\baselineskip}
\begin{lstlisting}[style=err]
unbalanced parenthesis
	: char at 0
	: eval invalid2 "(*$(x+y+z=0$*)";
\end{lstlisting}
\begin{lstlisting}[style=cmd]
eval invalid3 "(*\textcolor{blue}{$({\sim x})=0$}*)";
\end{lstlisting}
\vspace*{-\baselineskip}
\begin{lstlisting}[style=err]
operator (*$\sim$*) cannot be applied to the operand (*$x$*) of type variable
: char at 1
: eval invalid3 "(*$({\sim x})=0$*)";
\end{lstlisting}
\begin{lstlisting}[style=cmd]
eval invalid4 "(*\textcolor{blue}{$T[i+j]=i-1$}*)";
\end{lstlisting}
\vspace*{-\baselineskip}
\begin{lstlisting}[style=err]
operator = cannot be applied to operands (*$T[(i+j)]$*) and (*$(i-1)$*) of types word and arithmetic respectively
	: char at 6
	: eval invalid4 "(*$T[i+j]=i-1$*)";
\end{lstlisting}
\begin{lstlisting}[style=cmd]
eval invalid5 "(*\textcolor{blue}{$T[2]=1$}*)";
\end{lstlisting}
\vspace*{-\baselineskip}
\begin{lstlisting}[style=err]
operator = cannot be applied to operands (*$T[2]$*) and (*$1$*) of types word and number literal respectively
	: char at 4
	: eval invalid5 "(*$T[2]=1$*)";
\end{lstlisting}
The last example can be fixed as follows:
\begin{lstlisting}[style=cmd]
eval fixed5 "(*\textcolor{blue}{$T[2]=@1$}*)";
\end{lstlisting}
To understand why see Section \ref{sec:relative expressions}.

The last thing to note about the eval command is that Walnut overrides the files generated by an evaluation if the name of the evaluation is used in a new evaluation.

\subsection{def: def <name> <predicate>}\label{sec:def}
The word def stands for define. The syntax for this command is exactly the same as the syntax for eval command. The only difference between this command and eval is that the automaton constructed is saved in the directory ``/Walnut/Automata Library/'' for later use.  Suppose we write the following:
\begin{lstlisting}[style=cmd]
def sum10 "(*\textcolor{blue}{$x+y=10$}*)";
\end{lstlisting}
This creates as usual the files sum10.gv, sum10.txt, and sum10\_log.txt in the directory ``/Walnut/Result/''. However, it also saves a copy of sum10.txt in the directory ``/Walnut/Automata Library/''. Any automaton saved in this directory can be called in other predicates by referring to its name and the special character \$. To learn about calling see Sections \ref{sec:calling expressions} and \ref{sec:calling an automaton}.

Let us see examples of predicates calling the automaton sum10:
\begin{lstlisting}[style=cmd]
eval lessThanThree "(*\textcolor{blue}{$\text{E}a\text{ }a\mathbin{\text{>=}}8 \mathbin{\&} \$\text{sum10}(b,a)$}*)";
\end{lstlisting}
This predicate is satisfied by numbers $b$ for which there exist an $a\mathbin{\text{>=}}8$ such that $b+a=8$, i.e., $0,1,2$:
\begin{figure}[H]
	\centering
	\includegraphics[scale=.5]{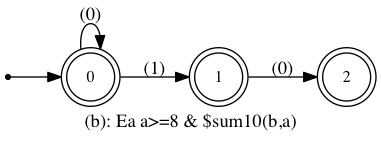}
	\caption{lessThanThree.gv}
	\label{fig:lessThanThree}
\end{figure}

We can send the same variable to both arguments of sum10:
\begin{lstlisting}[style=cmd]
eval five "(*\textcolor{blue}{$\$\text{sum10}(a,a)$}*)";
\end{lstlisting}
\begin{figure}[H]
	\centering
	\includegraphics[scale=.5]{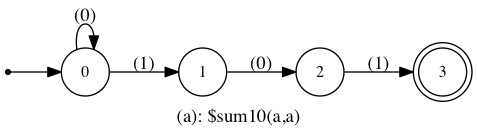}
	\caption{five.gv}
	\label{fig:five}
\end{figure}

We can send constants to any arguments of sum10:
\begin{lstlisting}[style=cmd]
eval three "(*\textcolor{blue}{$\$\text{sum10}(7,a)$}*)";
\end{lstlisting}
\begin{figure}[H]
	\centering
	\includegraphics[scale=.5]{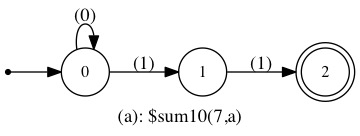}
	\caption{three.gv}
	\label{fig:three}
\end{figure}

Indeed, we can send any arithmetic expressions or predicates with one free variable to arguments:
\begin{lstlisting}[style=cmd]
eval three "(*\textcolor{blue}{$\$\text{sum10}(a-2,3*a)$}*)";
\end{lstlisting}
\begin{lstlisting}[style=cmd]
eval three "(*\textcolor{blue}{$\text{E}b\text{ }\$\text{sum10}(a,b+3=10)$}*)";
\end{lstlisting}
The resulting automaton for both of these is depicted in Figure \ref{fig:three}. We can call sum10 to define new automata:
\begin{lstlisting}[style=cmd]
def threeSum10 "(*\textcolor{blue}{$\$\text{sum10}(x+y,z)$}*)";
\end{lstlisting}
Now we can write 
\begin{lstlisting}[style=cmd]
eval three "(*\textcolor{blue}{$\text{E}y,z\text{ }\$\text{threeSum10}(x,y,z) \mathbin{\&} y=2 \mathbin{\&} z=5$}*)";
\end{lstlisting}
The result of this evaluation is again depicted in Figure \ref{fig:three}.

Now look at the following example:
\begin{lstlisting}[style=cmd]
eval nonsense "(*\textcolor{blue}{$\$\text{sum10}(a=b,4)$}*)";
\end{lstlisting}
\vspace*{-\baselineskip}
\begin{lstlisting}[style=err]
argument 1 of function sum10 cannot be an automaton with != 1 inputs
	: char at 1
	: eval nonsense "(*$\$\text{sum10}(a=b,4)$*)";
\end{lstlisting}
This is because the first argument is a predicate with two free variables.

We cannot send a variable in $\LSD[2]$ to an automaton that accepts only $\MSD[2]$, and expect getting anything interesting in return. The following example would run fine, but the result is another nonsense:
\begin{lstlisting}[style=cmd]
eval another_nonsense "(*\textcolor{blue}{$\text{?lsd\_2 }\$\text{sum10}(x,4)$}*)";
\end{lstlisting}
\begin{figure}[H]
	\centering
	\includegraphics[scale=.5]{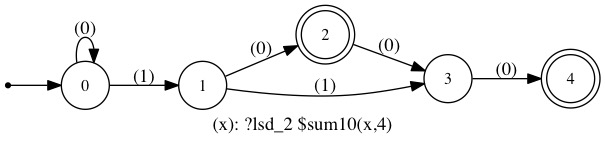}
	\caption{another\_nonsense.gv}
	\label{fig:another nonsense}
\end{figure}

The number of arguments when calling an automaton should match the number of inputs of that automaton:
\begin{lstlisting}[style=cmd]
eval invalid "(*\textcolor{blue}{$\$\text{sum10}(x,y,z)$}*)";
\end{lstlisting}
\vspace*{-\baselineskip}
\begin{lstlisting}[style=err]
function sum10 requires 2 arguments
	: char at 1
	: eval invalid "(*$\$\text{sum10}(x,y,z)$*)";
\end{lstlisting}

Always remember the roles of the inputs to an automaton created by the def command. For example, look at the following examples of the def command:
\begin{lstlisting}[style=cmd]
def f1 "(*\textcolor{blue}{$y < x$}*)";
def f2 "(*\textcolor{blue}{$x < y$}*)";
\end{lstlisting}
Now the following evaluates to an automaton accepting representations of numbers $>1$:
\begin{lstlisting}[style=cmd]
eval greater_than_1 "$f1(a,1)";
\end{lstlisting}
whereas the following evaluates to an automaton accepting representations of numbers $<1$: 
\begin{lstlisting}[style=cmd]
eval less_than_1 "$f2(a,1)";
\end{lstlisting}
This is because f1 is an automaton for which the first argument is greater than the second argument, whereas, f2 is an automaton for which the first argument is less than the second argument. \emph{Always remember that Walnut sorts inputs of an automaton based on their labels' lexicographic order.}

\subsection{macro: macro <name> <template>}
Recall that a subword $W[i..i+n-1]$ is a palindrome if $W[i+k]=W[i+n-1-k]$ for all $k<n$. The following calculates the palindromes in the Thue-Morse word\footnote{In specific terms, this calculation constructs an automaton accepting $(i,n)$'s for which $W[i..i+n-1]$ is a palindrome.}:
\begin{lstlisting}[style=cmd]
eval thue_pal "Ak k<n => T[i+k] = T[i+n-1-k]";
\end{lstlisting}
Now in order to calculate palindromes for the Fibonacci word, one has to type the entire predicate above, letter for letter, except for the substitution of $T$ by $F$:
\begin{lstlisting}[style=cmd]
eval fibonacci_pal "?msd_fib Ak k<n => F[i+k] = F[i+n-1-k]";
\end{lstlisting}
There are many other example where one would like to calculate the same calculation for many different words. Similar to the def command, the macro command is a mechanism for avoiding dull repetitions. For the palindrome example, one could define the following macro:
\begin{lstlisting}[style=cmd]
macro pal "?%0 Ak k<n => %1[i+k] = %1[i+n-1-k]";
\end{lstlisting}
This command saves the template \lstinline|?\%0 Ak k<n => \%1[i+k] = \%1[i+n-1-k]| in a file named pal.txt in the directory ``/Walnut/Macro Library/''.
Then to calculate palindromes for the Thue-Morse word, using a similar syntax to function calls, one could write: 
\begin{lstlisting}[style=cmd]
eval thue_pal "#pal(msd_2,T)";
\end{lstlisting}
The first step in computing eval and def commands is a preprocessing task in which all macros will be replaced by their template strings. In our example, Walnut replaces the \lstinline|#pal(msd_2,T)| by the template, saved in the file pal.txt, then substitutes \lstinline|%0| and \lstinline|%1| by \lstinline|msd_2| and \lstinline|T|, respectively. Now see how easily one can calculate plaindromes for the Fibonacci word:
\begin{lstlisting}[style=cmd]
eval fibonacci_pal "#pal(msd_fib,F)";
\end{lstlisting}

The important difference between the macro and the def commands is that the macro takes a template and stores it unaltered as a string, whereas the def command takes a predicate and stores the automaton calculated for that predicate. The macro saves the user from typing the same template over and over, while the def command saves the user from calculating the same predicate over and over.

Let us see a second example. An important automaton for any word is one that accepts triples $(i,j,n)$ for which subwords of length $n$ starting at positions $i$ and $j$ are the same. This automaton, which we call factoreq, shows up in many different calculations, a notable one being border calculation. Recall that $W[i..i+n-1]$ has a border of length $m$ if subwords $W[i..i+m-1]$ and $W[i+n-m..i+n-1]$ are equal. In other words, the $W[i..i+n-1]$ has a border of length $m$ if factoreq accepts $(i,i+n-m,m)$. In Walnut, for Thue-Morse, this is the same as writing:
\begin{lstlisting}[style=cmd]
def thue_factoreq "Ak k<n => T[i+k]=T[j+k]";
eval thue_border "$1 <= m & m <= n & $thue_factoreq(i,i+n-m,m)";
\end{lstlisting}
Now to calculate for Fibonacci, this becomes:
\begin{lstlisting}[style=cmd]
def fibonacci_factoreq "?msd_fib Ak k<n => F[i+k]=F[j+k]";
eval fibonacci_border "?msd_fib 1 <= m & m <= n & $fibonacci_factoreq(i,i+n-m,m)";
\end{lstlisting}
Using macros, one could capture the essence of the above calculations as follows:
\begin{lstlisting}[style=cmd]
macro factoreq "?%0 Ak k<n => %1[i+k]=%1[j+k]";
macro border "?%0 1 <= m & m <= n & #%1_factoreq(i,i+n-m,m)";
\end{lstlisting}
With these macros at hand, one can calculate factoreq and border for many different words. For example, for Thue-Morse the calculation becomes:
\begin{lstlisting}[style=cmd]
def thue_factoreq "#factoreq(msd_2,T)";
eval thue_border "#border(msd_2,thue)";
\end{lstlisting}
and for the Fibonacci it becomes:
\begin{lstlisting}[style=cmd]
def fibonacci_factoreq "#factoreq(msd_fib,F)";
eval fibonacci_border "#border(msd_fib,fibonacci)";
\end{lstlisting}

\begin{mdframed}[style=MyFrame]
	Walnut does not accept nested macro/function calls, i.e.,  none of the followings are accepted: \lstinline|$f(...,$g(...),...)|, \lstinline|$f(...,#m(...),...)|, \lstinline|#m(...,#n(...),...)|, \lstinline|#m(...,$g(...),...)|.
\end{mdframed}

\subsection{Matrices with eval and def}
We postponed discussing the exact syntax of the eval and def commands until this section. These two commands can take an optional argument:
\begin{itemize}
	\item eval <name> <space separated list of free variables>? <predicate> 
	\item def <name> <space separated list of free variables>? <predicate>
\end{itemize}

The question marks are representing the fact that the corresponding arguments are optional. When a list of free variables is preset, in addition to the ordinary computation, Walnut will calculate the \emph{incidence matrices} of the underlying graph of the computed automaton, corresponding to the transitions labeled by the listed free variables. It stores the matrices in the Maple syntax in a .mpl file in ``/Walnut/Result/'' directory.

Suppose $M\big(Q,q_0,F,\delta,\mathbf{S_1},\ldots,\mathbf{S_m}\big)$ is the automaton calculated for the predicate $(x_1,\ldots,x_n):P'$. For a free variable $x_i$ and a value $v \in \Sigma_{\mathbf{S_i}}$, we let $M\_x_i\_v$ denote the $|Q|\times |Q|$ matrix where the entry $M\_x_i\_v[p][q]$\footnote{$M\_x_i\_v[p][q]$ denotes the entry at $p$th row and $q$th column.}, for every pair of states $p$ and $q$,  is the number of tuples transitioning from $p$ to $q$ with $v$ in the $i$'th coordinate, or more formally
$$M\_x_i\_v[p][q] = \left|\left\{\left(\alpha_1,\ldots,\alpha_i,\ldots,\alpha_m\right): \delta\left((\alpha_1,\ldots,\alpha_i,\ldots,\alpha_m),p\right)=q \text{ such that } \alpha_i=v \text{ and } \alpha_j\in  \Sigma_{\mathbf{S_j}} \text{ for } j\neq i \right\}\right|.$$

We already saw in Figure \ref{fig:two inputs} an automaton accepting $b=a+1$. Now writing
\begin{lstlisting}[style=cmd]
eval example a "b=a+1";
\end{lstlisting}
causes Walnut to calculate the matrices $M\_a\_0$ and $M\_a\_1$:

\[
M\_a\_0 = 
\begin{bmatrix}
1 & 1 \\
0 & 0
\end{bmatrix}
\]
\[
M\_a\_1 =
\begin{bmatrix}
1 & 0 \\
0 & 1
\end{bmatrix}
\]

In the same command, we could ask Walnut to calculate the incidence matrices for $b$ as well as $a$ by writing:
\begin{lstlisting}[style=cmd]
eval example a "b=a+1";
\end{lstlisting}
\[
M\_b\_0 = 
\begin{bmatrix}
1 & 0 \\
0 & 1
\end{bmatrix}
\]
\[
M\_b\_1 =
\begin{bmatrix}
1 & 1 \\
0 & 0
\end{bmatrix}.
\]

There are a lot of interesting things one can do using incidence matrices.
For example for Fibonacci word, and Sturmian words in general, the number of distinct subwords of length $n$, i.e., subword complexity, is $n+1$. Now can one verify this using Walnut? The answer is yes. Please refer to \cite{fib3} for this and a lot of other interesting applications.

\subsection{reg}\label{sec:reg}
The word reg stands for regular expression. Before we talk about this command in detail, let us motivate the need for it through an example. Suppose we need an automaton accepting $\MSD[2]$ representations of powers of $2$ that are less than $20$. There is no straightforward way of constructing an automaton accepting representations of powers of $2$ using eval and def commands\footnote{The set of powers of $2$ is not expressible in Presburger arithmetic. However the extended Presburger arithmetic that involves automatic words is powerful enough to express this set (why?)}. Remember how def command saves automata definition in directory ``/Walnut/Automata Library/''? We can manually create a file power2.txt in this directory and write in it the definition of an automaton accepting binary representations of powers of $2$:
\begin{figure}[H]
\begin{CenteredBox}
\begin{lstlisting}[style=file]
msd_2
0 0
0 -> 0
1 -> 1
1 1
0 -> 1
\end{lstlisting}
\end{CenteredBox}
\captionof{lstlisting}{power2.txt}
\label{file:power2}
\end{figure}
See Section \ref{sec:define automata in text files} to learn the syntaxes of defining an automaton in a text file. Now we can write a predicate for powers of $2$ that are less than $20$:
\begin{lstlisting}[style=cmd]
eval power2LessThan20 "(*\textcolor{blue}{$\$\text{power2}(a) \mathbin{\&} a<20$}*)";
\end{lstlisting}
\begin{figure}[H]
	\centering
	\includegraphics[scale=.5]{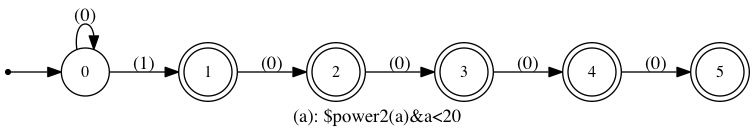}
	\caption{power2LessThan20.gv}
	\label{fig:power2LessThan20}
\end{figure}
The better approach to this problem is to use the reg command. This command can be used in two different ways:
\begin{enumerate}
\item reg <name> <number system> <regular expression>
\item reg <name> <alphabet> <regular expression>
\end{enumerate}
To construct an automaton that accepts $\MSD[2]$ representations of powers of $2$ we can use the first syntax:
\begin{lstlisting}[style=reg]
reg power2 msd_2 "(@\textcolor{blue}{$0^*10^*$}@)";
\end{lstlisting}
Similar to eval and def command, the second argument is a name. The
third argument is a number system, and the last argument is a regular
expression.  This will construct an automaton for the regular
expression, and saves the result in a file named power2.txt in
``/Walnut/Automata Library/'', in addition to saving, as usual, the
drawing of the automata in power2.gv in directory ``/Walnut/Result/''.
The file power2.txt is exactly the same as File \ref{file:power2}.

Note that $0^*10^*$ is also the $\LSD[2]$ representations of powers of
$2$. For this reason, there needs to be a way of defining an automaton
from a regular expression that is not restricted to a particular number
system. We call such an automaton a non-arithmetic automaton; see
Section \ref{sec:non-arithmetic automata}. To create a non-arithmetic
automaton accepting a pattern we can use the second version of the reg
command in which instead of a number system we specify an alphabet:
\begin{lstlisting}[style=reg]
reg general_power2 {0,1} "(@\textcolor{blue}{$0^*10^*$}@)";
\end{lstlisting}
The file general\_power2.txt generated by this command is the following:
\begin{figure}[H]
\begin{CenteredBox}
\begin{lstlisting}[style=file]
{0,1} 
0 0
0 -> 0
1 -> 1
1 1
0 -> 1
\end{lstlisting}
\end{CenteredBox}
\captionof{lstlisting}{general\_power2.txt}
\label{file:general_power2}
\end{figure}

The only difference between Files \ref{file:power2} and
\ref{file:general_power2} is the first line; see Section
\ref{sec:define automata in text files} for more information.

Since general\_power2 is not restricted to a particular number system both of the following are valid:
\begin{lstlisting}[style=cmd]
eval power2Less20_msd "(*\textcolor{blue}{$\text{?msd\_2 }\$\text{general\_power2}(a)\mathbin{\&} a<20$}*)";
eval power2Less20_lsd "(*\textcolor{blue}{$\text{?lsd\_2 }\$\text{general\_power2}(a)\mathbin{\&} a<20$}*)";  
\end{lstlisting}

Note that $0^*10^*$ is also $\MSD[n]$ and $\LSD[n]$ representations of powers of $n$ for any $n>1$. So what if we write the following:
\begin{lstlisting}[style=cmd]
eval invalid "(*\textcolor{blue}{$\text{?msd\_3 }\$\text{general\_power2}(a)\mathbin{\&} a<20$}*)";
\end{lstlisting}
\vspace*{-\baselineskip}
\begin{lstlisting}[style=regerr]
in computing cross product of two automata, variables with the same label must have the same alphabet
	: char at 12
	: eval power3_less10_msd "(@$\text{?msd\_3 }\$\text{general\_power2}(a)\mathbin{\&} a<20$@)";
\end{lstlisting}

Here Walnut is complaining about the fact that $\MSD[3]$'s alphabet is
$\{0,1,2\}$, whereas general\_power2's input alphabet is $\{0,1\}$; see
File \ref{file:general_power2}. Walnut is very strict about matching
alphabets, which we understand is sometimes a drawback, for example in
the above example. We will improve this feature in future releases of
Walnut.

We use the automata library in \cite{dk.brics} for converting regular
expressions to automata. To see the syntax for regular expressions
refer to this
\href{https://www.brics.dk/automaton/doc/index.html}{website}.

Here is a summary of the important syntax:
\begin{table}[h]
\centering
\begin{tabular}{| m{1.7cm} | m{9cm} |} 
\hline
$*$  & zero or more occurrences of an expression\\
\hline
$+$  & one or more occurrences of an expression\\
\hline
$\mid$   & union, e.g., $(0\mid1)2^*$\\
\hline
$.$   & any single character, e.g., $2.^*$\\
\hline
$[]$ & character class, e.g., $[1-4]$ means any of $1,2,3,4$\\
\hline
$\wedge$& complement of a character class, e.g., $[^\wedge2-9]$ is any of $0,1$\\
\hline
\end{tabular}
\caption{syntax summary for regular expressions}
\label{tab:regular expressions}
\end{table}

The alphabet in the second version of reg command could only be a subset of $\{0,1,\ldots,9\}$. Therefore the following is not allowed:
\begin{lstlisting}[style=reg]
reg invalid {0,-1,-2} "(@\textcolor{blue}{$-20^*$}@)";
\end{lstlisting}
\vspace*{-\baselineskip}
\begin{lstlisting}[style=regerr]
the input alphabet of an automaton generated from a regular expression must be a subset of {0,1,...,9}
	: reg invalid {0,-1,-2} "(@$-20^*$@)";
\end{lstlisting}

The last thing to note about the reg command is that for any regular
expression $r$, the resulting automaton from reg command is the
intersection of the automaton for $r$ with
$\Sigma^*$ where $\Sigma$ is the alphabet given as the third argument of the reg command. For example:

\begin{lstlisting}[style=reg]
reg note_the_intersection {2,3} "(@\textcolor{blue}{$2.^*2$}@)";
\end{lstlisting}
\begin{figure}[H]
	\centering
	\includegraphics[scale=.5]{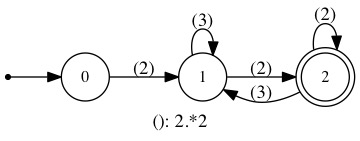}
	\caption{note\_the\_intersection.gv}
	\label{fig:note the intersection}
\end{figure}

\subsection{load: load <file name>}\label{sec:load}
We can write any series of legitimate Walnut commands in a text file and save it in the directory ``/Walnut/Command Files/''. Then we can load it by writing the following in Walnut:
\begin{lstlisting}[style=reg]
load file_name.txt;
\end{lstlisting}
This runs all commands in file\_name.txt in the order they appear. 
Recall that the file's encoding must be UTF-8. 

\section{Working with Input/Output}\label{sec:working with input and output}
Throughout this section it is assumed that all files have UTF-8 encoding. This is the default encoding for most text editors.

\subsection{Defining Automata in Text Files}\label{sec:define automata in text files}
In this section we learn how to manually define all automata types in
text files. Recall that an ordinary automaton can be thought of as an
automaton with output, in which states with non-zero outputs are
treated as final states; see Section \ref{sec:words and automata}.
Therefore suppose
$M\big(Q,q_0,O,\delta,\Sigma,\mathbf{S_1},\mathbf{S_2},\ldots,\mathbf{S_n}\big)$
is an automaton with output with $m$ states and $n$ inputs over number
systems $\mathbf{S_i}$. Furthermore suppose that the states are labeled
$0$ to $m-1$, i.e., $Q=\big\{0,1,\ldots,m-1\big\}$, and that
$q_0=0$\footnote{If an automaton does not follow these criteria we can
always come up with an isomorphic one that does.}. To define $M$ in a
text file, first create a text file M.txt\footnote{File names in Walnut
start with letters and can contain alphanumerics and underscore.}. The
first line must be $$S_1\text{  }S_2\text{  }\cdots\text{  }S_n$$which
declares inputs' number systems. The second line is declaring state $0$
as follows: $$0\text{  }\alpha$$ where $\alpha=O(0)$. Next lines are
declarations of transitions of state $0$ which can come in any order.
For every $\alpha_1\in \Sigma_{\mathbf{S_1}}, \alpha_2\in
\Sigma_{\mathbf{S_2}}, \ldots ,\alpha_n\in \Sigma_{\mathbf{S_n}}$
transitions are of the following form
$$\alpha_1\text{  }\alpha_2\text{  } \cdots \text{  }\alpha_n \mathbin{\text{->}} q$$
whenever $\delta\big(0,\alpha_1,\alpha_2,\ldots,\alpha_n\big) = q$.
There is no need to declare transitions to a dead state. For any pair
$(\alpha_1,\alpha_2,\ldots,\alpha_n)$ that no declaration of the form
above is mentioned, it is assumed that
$\delta(0,\alpha_1,\alpha_2,\ldots,\alpha_n)$ is a dead state. We can
use $*$, the wildcard matching symbol, in place of any symbol
$\alpha_i$. If there is a transition of the form $$\alpha_1\text{  }
\alpha_2\text{  } \ldots\text{  } \alpha_{i-1}\text{  } *\text{  }
\alpha_{i+1}\text{  }\ldots\text{  }\alpha_n \mathbin{\text{->}} q$$ it
is understood that
$\delta\big(0,(\alpha_1,\alpha_2,\ldots,\alpha_{i-1},\beta,\alpha_{i+1},\ldots,\alpha_n)\big)
= q$ for every $\beta \in \Sigma_{\mathbf{S_i}}$.  After transitions of
the state $0$ are declared, we declare state $1$ followed by its
transitions. We continue like this until all states and their
transitions are declared. Note that nowhere in M.txt we are defining
the output alphabet $\Sigma$. The output alphabet is inferred
indirectly by looking at the state declarations. To see examples refer
to Files \ref{file:two inputs}--\ref{file:general_power2}.

A non-arithmetic automaton is defined in the same way, except that in
the first line, for inputs that do not have number systems associated
with them, we write down the alphabet between curly brackets. Alphabets
can be any subset of integers. As an example see File
\ref{file:general_power2}.

Defining true or false automata in text files is easy. They have only
one line and it is either true or false.

As one last example, the paperfolding words are given by the following automaton; see Section \ref{sec:non-arithmetic automata} and article \cite{paperfolding} for more details:
\begin{figure}[H]
	\centering
	\includegraphics[scale=.5]{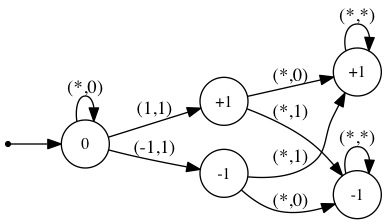}
	\caption{Automata for paperfolding words}
	\label{fig:paperfoldings}
\end{figure}
This automaton is defined in the file PF.txt in directory ``/Walnut/Word Automata Library/'':
\begin{figure}[H]
\begin{CenteredBox}
\begin{lstlisting}[style=file]
{-1,1} lsd_2

0 0
* 0 -> 0
1 1 -> 1
-1 1 -> 2

1 1
* 1 -> 4
* 0 -> 3

2 -1
* 0 -> 4
* 1 -> 3

3 1
* * -> 3

4 -1
* * -> 4
\end{lstlisting}
\end{CenteredBox}
\captionof{lstlisting}{PF.txt}
\label{file:paperfoldings}
\end{figure}

\subsection{Defining New Automatic Words}\label{sec:define new automatic words}
The eval and def commands always produce automata accepting a predicate, therefore the result is never an automaton with output. So to define an automatic word $W$, we need to manually define its corresponding automaton with output in the directory ``/Walnut/Word Automata Library/''. For example, the Thue-Morse word is defined in the file ``/Walnut/Word Automata Library/T.txt'' as follows:
\begin{figure}[H]
\begin{CenteredBox}
\begin{lstlisting}[style=file]
msd_2
0 0
0 -> 0
1 -> 1
1 1
0 -> 1
1 -> 0
\end{lstlisting}
\end{CenteredBox}
\captionof{lstlisting}{T.txt}
\label{file:thue morse}
\end{figure}

\subsection{Defining New Number Systems}\label{sec:define new number systems}

Based on Definition \ref{def:walnut_number_system}, to define a new
number system $\mathbf{S}$, we need to define automata for
$R_{\mathbf{S}}$, $+_{\mathbf{S}}$, and $<_{\mathbf{S}}$. We do not
need to define an automaton for $=_{\mathbf{S}}$, because it can be
generated easily, since we assumed that $w_1=_{\mathbf{S}} w_2$ if and
only if $w_1 = w_2$ for any two words $w_1$ and $w_2$ of the same
length. The automata for number systems must be defined in the
directory ``/Walnut/Custom Bases/''. For example, for the number system
$\mathbf{S}$, assuming it is $\MSD$, one needs to create msd\_S.txt,
msd\_S\_addition.txt, and msd\_S\_less\_than.txt for $R_{\mathbf{S}}$,
$+_{\mathbf{S}}$, and $<_{\mathbf{S}}$ respectively.  If $\mathbf{S}$
is $\LSD$, file names must be lsd\_S.txt, lsd\_S\_addition.txt, and
lsd\_S\_less\_than.txt respectively. The number system $\mathbf{S}$
defined in this way can be used in predicates by typing ?msd\_S or
?lsd\_S depending on whether $\mathbf{S}$ is $\MSD$ or $\LSD$. If the
automaton for $<_{\mathbf{S}}$ is not defined by the user, Walnut
assumes that $<_{\mathbf{S}}$ is the lexicographic ordering, i.e., if
$w_1$ and $w_2$ are of the same length, then $w_1 <_{\mathbf{S}} w_2$
if and only if $w_1$ comes before $w_2$ in lexicographic
order\footnote{Lexicographic ordering on symbols is assumed to be
$\cdots<-2<-1<0<1<2<\cdots$.}. If the automata for $R_{\mathbf{S}}$ is
not given, then $R_{\mathbf{S}}$ is assumed to be
$\Sigma_{\mathbf{S}}^*$. The alphabet $\Sigma_{\mathbf{S}}$ is inferred
from the automaton for $+_{\mathbf{S}}$ which is always given.

Note that reversing all automata for $\MSD[n]$ we get the corresponding
automata for $\LSD[n]$. The same goes with $\MSD[fib]$ and $\LSD[fib]$.
Thus for a number system $\MSD[S]$ if we only define files for
$\MSD[S]$, but then typing ?lsd\_S in a predicate, Walnut automatically
creates automata for $\LSD[S]$ by reversing those of $\MSD[S]$ and vice
versa. However the user should be cautious since there could very well
be number systems for which the difference between $\MSD$ and $\LSD$ is
more than the direction of the arrows in their corresponding automata.

\subsection{Converting .gv files to .jpeg}
The drawings of automata in Walnut are stored in .gv files. Not only can the software Graphviz open the files with this extension,
but it can also convert them to many different file formats. For example, suppose you have a file named automaton.gv. To convert it to automaton.jpeg type the following in the terminal:
\begin{lstlisting}[language=bash]
$dot -Tjpg automaton.gv -o automaton.jpeg 
\end{lstlisting} 
See \href{http://www.graphviz.org/}{Graphviz} to learn how to convert
.gv files to other file types.

\printbibliography[heading=bibintoc]
\end{document}